\documentclass[11pt]{article}
\usepackage[utf8]{inputenc}
\usepackage[margin=1in]{geometry}
\usepackage{amsmath}
\usepackage{amssymb}
\usepackage{amsfonts}
\usepackage{amsthm}
\usepackage{bm}
\usepackage{color}
\usepackage{enumitem}
\usepackage[dvipsnames]{xcolor}

\usepackage[colorlinks,linkcolor=blue,citecolor=blue]{hyperref}
\usepackage[square,numbers,sort&compress]{natbib}

\usepackage{comment} 
\usepackage{subcaption}
\usepackage{pgf,tikz,pgfplots}
\usepackage{mathrsfs}
\newcommand{\s}{\mathbf{s}}

\newcommand{\E}{\mathop{\mathbb{E}}}
\newcommand{\opt}{\mathsf{OPT}}
\renewcommand{\v}{\mathbf{v}}
\newcommand{\kgnote}[1]{{\color{orange}{[Kira: #1]}}}
\newcommand{\alon}[1]{{\color{Green}{[Alon: #1]}}}
\newcommand{\shuran}[1]{{\color{RoyalBlue}{[Shuran: #1]}}}

\newtheorem{definition}{Definition}[section]
\newtheorem{lemma}{Lemma}[section]
\newtheorem{theorem}{Theorem}[section]

\newtheorem{example}{Example}[section]
\newtheorem{proposition}{Proposition}[section]

\title{Private Interdependent Valuations}

\author{
	Alon Eden\thanks{Harvard University; {\tt aloneden@seas.harvard.edu}. Supported by NSF Award IIS-2007887.}
	\and
	Kira Goldner \thanks{Boston University; {\tt goldner@bu.edu}. Supported by a Shibulal Family Career Development Professorship.}
	\and
	Shuran Zheng \thanks{Harvard University; {\tt shuran\_zheng@seas.harvard.edu}. Supported by NSF Award IIS-2007887.}
}
\date{}

\begin{document}

\maketitle
\begin{abstract}
    We consider the single-item \emph{interdependent value setting}, where there is a single item sold by a monopolist, $n$ buyers, and each buyer has a private signal $s_i$ describing a piece of information about the item.  Additionally, each bidder $i$ has a valuation function $v_i(s_1,\ldots,s_n)$ mapping the (private) signals of all buyers into a positive real number representing their value for the item.  This setting captures scenarios where the item's information is asymmetric or dispersed among agents, such as in competitions for oil drilling rights, or in auctions for art pieces.  Due to the increased complexity of this model compared to the standard private values model, it is generally assumed that each bidder's valuation function $v_i$ is \textit{public knowledge} to the seller or all other buyers. But in many situations, the seller may not know the bidders' valuation functions---how a bidder aggregates signals into a valuation is often their private information. In this paper, we design mechanisms that guarantee approximately-optimal social welfare while satisfying ex-post incentive compatibility and individually rationality for the case where the valuation functions are private to the bidders, and thus may be strategically misreported to the seller. 
	
	When the valuations are public, it is possible for optimal social welfare to be attained by a deterministic mechanism when the valuations satisfy a single-crossing condition. In contrast, when the valuations are the bidders' private information, we show that no finite bound on the social welfare can be achieved by any deterministic mechanism even under single-crossing. %, even if there are only two bidders. 
	Moreover, no randomized mechanism can guarantee better than $n$-approximation. We thus consider valuation functions that are submodular over signals (SOS), introduced in the context of combinatorial auctions in a recent breakthrough paper by Eden et al. [EC'19]. Our main result is an $O(\log^2 n)$-approximation randomized mechanism for buyers with private signals \emph{and} valuations under the SOS condition. We also give a tight $\Theta(k)$-approximation mechanism for the case each agent's valuation depends on at most $k$ other signals even for unknown $k$.
\end{abstract}

%\pagenumbering{gobble}
\newpage
%\pagenumbering{arabic}
%\clearpage
%\setcounter{page}{1}

\begin{comment}
\noindent To Do:
\begin{itemize}
    \item Prelims - unassigned
    \item Prelims: new payment identity \kgnote{Kira}
    \item Main proof: clean up \kgnote{Kira + Alon}
    \item Main proof: add intuition \kgnote{Kira + Alon}
    \item Deterministic LB -- use new payment identity \kgnote{Shuran} -- sec 5 (subsumed by 6?) and sec 6
    \item $O(k)$-apx for dependence $k$ \kgnote{Alon}
\end{itemize}

\end{comment}

\section{Introduction} \label{sec:intro}
Suppose a house is for sale.  Each prospective buyer researches the area, performs an inspection on the house, and does their best to learn how much the house is worth to them.  Some learn about the house's foundation, plumbing, roofing, and electrical condition.  Some learn about the neighborhood's public schools and community programs.  Others search for crime rates in the area.  Some look at upcoming construction or nearby highways.  While each prospective buyer has different information, which is their own private information, the amount that each buyer is willing to pay for the house depends on the information that \emph{all} of the prospective buyers have.  Similarly, one buyer may not personally know or care about public schools, but they may impact the \emph{resale value} of the house, and thus how much the buyer is willing to pay for the house.

%Consider a realty agency selling a house to a group of potential buyers. Each potential buyer inspects the house, and might gather some information about it --- one buyer might learn about the plumbing condition, another buyer might figure out the public schools' rankings in the school district, another buyer might find out about the neighborhood's criminal records, etc. While each potential buyer has a different piece of information, private to this buyer, they also care about all the mutual information that other buyers might gather, which might affect the amount of money they are willing to pay for the house. 

Settings where one buyer's value depends on another buyer's private information
%These kinds of situations where one buyer's private information might affect another buyer's value 
%\sz{valuation} 
are captured by the Nobel-prize-winning\footnote{This model was recently highlighted as one of the main achievements of Paul Milgrom and Robert Wilson in the scientific justification for their winning of the 2020 Sveriges Riksbank Prize in Economic Sciences in Memory of Alfred Nobel 2021~\cite{nobel2021considerations}.} \textit{Interdependent Value} (IDV) model introduced by Milgrom and Weber~\cite{milgrom1982theory} as a generalization of the  Common Value (mineral rights) model of Wilson~\cite{wilson1969communications}. In the IDV model, each buyer $i$ has 
a \emph{signal} $s_i\in \mathbb{R}^+$ known only to $i$ representing their piece of private information about the item, where this signal takes on a value from the signal space $S_i$.  Each buyer also has a \emph{valuation function} $v_i:\prod_j S_j\rightarrow \mathbb{R}^+$ that maps all $n$ of the buyers' private signals to a value (or willingness to pay) for the item for sale. Prior works considered the valuation functions to be 
%generally considered to be public information, known
public knowledge,
either to the seller~\cite{milgrom1982theory,RTC,Li17,CFK,EFFG,EFFGK,EdenFTZ21} and/or to other buyers~\cite{DM00,maskin1992,ItoP06,RobuPIJ13,EdenFTZ21}.  

While the public valuation assumption is used in all prior work, %as it is necessary for any results as elegant as attaining maximum welfare,
%is commonly used, and allows for beautiful results to be proven, 
we posit that it is simply unrealistic to assume that the valuation functions are public. Why should the seller be unaware of a buyer's private information, yet fully aware of the intricate manner in which the buyer takes into account all of the buyers' information when deciding their own actual value?
In contrast, our work, to the best of our knowledge, is \textbf{the first to study and provide approximation guarantees to social welfare in the IDV model when the valuation functions are the \emph{private information} of the buyers.}

When the valuations functions are public, a direct mechanism only needs to ask each agent to report their private signal $s_i$.  Then, it decides upon a winner and how much the winner pays by using the reported $s_i$'s and the publicly known valuation functions $v_i$'s. 
A key challenge of the interdependent literature is in eliciting agents' private information. Even when the valuations are public, a \textit{single-crossing} condition is required in order to obtain optimal welfare using an ex-post IC-IR mechanism for simple single parameter settings~\cite{ausubel1999generalized}.\footnote{Ex-post IC means that reporting the true signal is an ex-post Nash Equilibrium, and guarantees non-negative expected utility, see Definition~\ref{def:EPIC-IR}.} Only recently, approximation results have been shown for settings without single-crossing~\cite{EFFG,EFFGK,AmerTC21} (more on this in Section~\ref{sec:related}). Roughly speaking, single-crossing implies that if agent $i$'s signal changes, it will impact agent $i$'s valuation function more than any other agent---that is, of all agents with sensitivity to $i$'s signal, agent $i$'s sensitivity is the highest  
(see formally in Definition~\ref{def:sc}). 

But private valuations are much more challenging for welfare maximization: even for the single-item setting subject to single-crossing, we hit a wall with deterministic mechanisms, as demonstrated in the following example. 
\begin{example}[No finite bound for deterministic mechanisms under single-crossing] \label{example:sc-no-det}
	In this example, there are two agents, Alice and Bob, with signals from the $[0,1]$ interval (signal spaces $S_{A}=S_{B}=[0,1]$). 
	%Consider the following two cases: 
	
	%\noindent\textbf{Case 1.} 
	Suppose that Alice and Bob have the same valuation $v_A(s_A,s_B)=v_B(s_A,s_B)=s_A\cdot s_B$. Both Alice and Bob have signals equal to $1$, which implies they are both willing to pay $\$1$ for the item.
	Consider any deterministic mechanism $M$.  Without loss of generality, suppose that under these reports, the mechanism allocates to Alice. Then we imagine the following scenario.  
	
	%\noindent\textbf{Case 2.} 
	Alice's valuation is the same as before, while Bob has a valuation $v_B(s_A,s_B)=0.5+s_B$. Alice's signal is equal to $1$, while Bob's signal is now $0$.
	Now, Alice's value for the item is $0$, while Bob has a value of $0.5$. Therefore, to guarantee any finite approximation, the mechanism $M$ must allocate the item to Bob, and by individual rationality, Bob pays at most $0.5$.
	However, this implies that in the initial scenario when Bob has a valuation function $v_B(s_A,s_B)=s_A\cdot s_B$ and signal $s_B=1$, he would prefer to misreport both his valuation function and signal to this scenario, earning Bob the item with a payment of at most 0.5. Since his value is \emph{actually} $1$, he would earn positive (hence increased) utility. Hence, $M$ is not actually ex-post IC, and any deterministic mechanism must suffer an unbounded loss in welfare. Notice that 	this is true even though all valuations satisfy the single-crossing condition from Definition~\ref{def:sc} in this example.
	
	%In both cases, the valuations satisfy the single-crossing condition in Definition~\ref{def:sc}. Now assume that the seller does not know their valuations and decides the allocation based on their reported valuations and signals. In Case 1, in order to get any approximation, a deterministic mechanism must give the item to an agent. Let's assume w.l.o.g. the item is allocated to Alice. In Case 2, Alice's value for the item is $0$, while Bob has a value of $0.5$. Therefore, we must allocate the item to Bob, and by individual-rationality, Bob pays at most $0.5$. However, if this is the case, then in Case 1, Bob would rather misreport his value and signal to be the same as in Case 2. As a consequence, Bob will get the item, and pay at most 0.5. Since his value is $1$ in Case 1, he would incur a positive utility. Hence, there is no ex-post IC-IR mechanism that gives the item to Alice in Case 1, and to Bob in case 2, and any deterministic mechanism must suffer an unbounded loss in welfare.     
\end{example}
Thus, in stark contrast to when valuations are public information, no deterministic mechanism can guarantee \emph{any} finite approximation when buyer valuation functions are their own private information---even under the single-crossing condition, and  even when there are only two bidders. 

%Intuitively, the difficulty comes from the following. In the public valuations case,
When valuations are public, the auctioneer only needs to elicit signals.  If an agent lowers their report of their signal, they can decrease the value that the mechanism observes for other agents, and thus other agents' chances of winning.  However, under the single-crossing condition, agent $i$ is the most sensitive of all agents to $i$'s signal, so in fact their own valuation would decrease even more than others'.  This property enables the auctioneer to design an allocation rule for agent $i$ that is monotone non-decreasing in agent $i$'s reported signal and thus guarantee truthfulness.

However, when the valuations are unknown, the auctioneer elicits both signals and valuation functions, and this gives some intuition for why this setting is so much harder. % to approximate.  
The agents can now reduce their signal reports to lower others' observed values, and at the same time, ``compensate" for the loss in their own observed value by reporting a different ``higher" valuation function.  This is essentially what Bob can do to earn more utility in Example~\ref{example:sc-no-det}.  Therefore, when the valuations are private, the designer must guarantee that bidders cannot earn more utility by misreporting a combination of a valuation function and signal that will give the same observed value. %\shuran{Not sure if it's accurate or understandable, but the following sentence is a little bit confusing.} 
%they must not get a better allocation based on the specific signal and valuation function they reported. 
We formalize this notion in  Propositions~\ref{prop:epicir_suf} and \ref{prop:epicir_nec} by showing that the allocation can basically only depend on the reports of other agents' valuations $\v_{-i}$ and signals $\s_{-i}$, and the evaluation of bidder $i$'s valuation function on the entire actual signal profile $v_i(\s)$, but not the specific $v_i(\cdot)$ or $s_i$ that led to this value. Moreover, the allocation must be monotone non-decreasing in this value $v_i(\s)$.  

Using this characterization, we show that under single-crossing, even considering randomized allocation rules, the best mechanism cannot guarantee better than an $n$-approximation to the optimal welfare (see Proposition~\ref{prop:sc-n})---it might as well give the item to a random bidder. Therefore, unlike in the case of public valuations, in our case, single-crossing %does not seem to `buy' us anything
does not seem to simplify the problem. This leads us to consider
valuations that satisfy %\sz{another condition that guarantees a good approximation of optimal social welfare when the valuation functions are public,} 
the submodularity over signals (SOS) condition, introduced in the context of combinatorial auctions in the public valuation setting by Eden et al.~\cite{EFFGK}.

Submodularity over signals follows the intuition that a specific agent's information becomes less significant as we get stronger signals (more information) from other agents. This condition is satisfied by many well-known instances of interdependent valuations that appear in the literature such as mineral rights~\cite{wilson1969communications,RTC}, weighted-sum valuations~\cite{Myerson,klemperer1998auctions, RTC,EdenFTZ21,ChenEW}, maximum of signals~\cite{bulow2002prices,bergemann2020countering,ChenEW}, etc. When valuations are public information, Eden et al.~\cite{EFFGK} show that when valuations satisfy the SOS condition, there exists a mechanism that guarantees a 4-approximation to the optimal social welfare when the buyers are single-parameter, and there are downward-closed feasibility constraints on which agents can be simultaneously served (they do not require single-crossing). They also show that combined with a separability assumption, the SOS condition allows for a mechanism that can yield a 4-approximation for a very general setting of combinatorial auctions, overcoming well-known impossibility results~\cite{JM69,DM00}. Their mechanisms are inherently randomized, as can be shown in the following example from~\cite{EFFGK}. %for an auctioneer selling identical items, unit-demand buyers, and downward-closed constraints on which agents can be simultaneously sperved; they do not require single-crossing.  They also show that combined with a separability assumption, the SOS condition allows for a mechanism that can yield a 4-approximation for a very general setting of combinatorial auctions, overcoming well-known impossibility results~\cite{JM69,DM00}. Their mechanisms are inherently randomized, as can be shown in the following example from~\cite{EFFGK}.

\begin{example}[No finite bound for deterministic mechanisms under submodularity over signals \cite{EFFGK}] \label{example:sos-no-det}
	Consider the public valuation setting with two agents, Carl and Daphne, where only Carl holds  information regarding the item for sale. Carl's signal can be either low or high ($s_C\in\{0,1\}$). Carl's valuation function is $v_C(s_C) = 1+s_C$, while Daphne's valuation function is $v_D(s_C) = \alpha s_C$, where $\alpha\gg 1$. The valuations are SOS according to Definition~\ref{def:sos}. In order to have any approximation guarantee, a deterministic mechanism must allocate to Carl when his signal is low, as Daphne has value 0 when $s_C=0$, and must allocate to Daphne when Carl's signal becomes high. But again, this allocation rule cannot be supported by an ex-post IC-IR mechanism, since when Carl has a high signal, he'd rather misreport to $s_C=0$, get allocated, and pay at most 1 by individual rationality, granting him a strictly positive utility gain.
\end{example}

To summarize, deterministic mechanisms cannot guarantee any finite approximation to social welfare for valuations that satisfy single-crossing (Example~\ref{example:sc-no-det}) or SOS (Example~\ref{example:sos-no-det}), and randomized mechanisms cannot guarantee a better than $n$-approximation under single-crossing (Proposition~\ref{prop:sc-n}).  Naturally, we next consider randomized mechanisms for valuations that satisfy the SOS condition.  Our main result is as follows.

\vspace{0.1in}

\noindent\textbf{Main Result. } When valuations are the private information of bidders, and the valuations satisfy the submodularity over signals condition, 
there exists an ex-post IC-IR mechanism that gives an $O(\log^2 n)$-approximation to the optimal social welfare.

\vspace{0.1in}

We point out a few benefits of our result.  First, this setting and result is \emph{prior-free}---the guarantee holds for any signal profile $\s$ and valuations $\v$ without assuming that they are drawn from some Bayesian prior distribution.  Second, the mechanism can be implemented in polytime.  In fact, for each of the $n$ bidders, it only elicits each bidder's signal and makes $2n-1$ queries to each bidder's valuation function at different signal profiles.  It does \emph{not} ask the agents to disclose their entire valuation functions, supporting information minimization and making participation in the mechanism more desirable for privacy-aware buyers, as well as simplifying the mechanism. These properties make our result detail-free in arguably the strongest sense for the IDV setting, in line with Wilson's Doctrine~\cite{wilson1987game}, and as advocated by Dasgupta and Maskin~\cite{DM00}. 
 %maintaining both privacy for these functions and simplicity for the mechanism
More details about the main result can be found in Section~\ref{sec:sos}.

Moreover, we consider the problem of maximizing social welfare when the valuation functions are private and unconstrained by any assumptions (they only need to satisfy monotonicity in signals). For this setting, we give a mechanism whose performance is parameterized by the number of signals an agent's valuation might depend on. % \shuran{might care about?}. 
If every agent's valuation depends at most on $k$ other agents' signals, then our mechanism achieves a $2(k+1)$ approximation (Theorem~\ref{thm:k-dep}). This is tight up to constant factors, as 
\begin{itemize}
	\item for every $k$ there exists an instance with \textit{public} valuations that depend on at most $k$ other agents' signals, and no mechanism can get better than $k+1$-approximation; 
	\item  for every $k$ there exists an instance with \textit{private} valuations that (1) depend on at most $k$ other agents' signals and (2) satisfy single-crossing, and no mechanism can get better than $k+1$-approximation. 
\end{itemize}

\subsection{Road Map} In Section~\ref{sec:prelims} we formally present our model, where in Section~\ref{sec:truthful-char} we provide the truthfulness characterization of our setting, and show the implication regarding randomized mechanisms under SC valuations. In Section~\ref{sec:sos} we present our main result in devising an $O(\log^2 n)$-approximation mechanism under SOS valuations. Once we construct a mechanism according to our truthfulness characterization, the remaining challenges are feasibility and proving that our approximation is achieved.  First, we discuss high-level intuition for our methods.  Then, in subsection~\ref{sec:feasible} we formally prove feasibility, and in subsection~\ref{sec:approx} we formally prove the approximation. Finally, in Section~\ref{sec:k-depend} we devise a mechanism that gives an $O(k)$-approximation to the optimal welfare when each agent's valuation function depends on at most $k$ other agents, and we show it's tight up to a constant factor.

%\begin{itemize}
%	\item Interdependent values are important. Nobel prize.
%	\item private signals, public valuations.
%	\item single item is solved under single-crossing.
%	\item Assumes public valuations.
%	\item A recent line of works tries to relax stringent conditions under which positive results can be attained.
%	\item *ALL* works assume public valuations.
%	\item private valuations make things crazy hard. Insert example.
%	\item give intuition for what goes wrong.
%	\item Our positive results.
%1\end{itemize}

\subsection{Related Work} \label{sec:related}
\paragraph{Valuations unknown to the seller.} 
%\vspace{.3cm}
%\textbf{Valuations unknown to the seller.} 
The most closely related work to our work is the seminal (and beautiful) paper of Dasgupta and Maskin~\cite{DM00}. Dasgupta and Maskin design auctions where the seller is oblivious of the bidders' valuation functions, and bidders do not report their valuation function to the seller, but rather report a contingent bidding function in the form of ``if the other bidder bids $\$x$, I will bid $\$y$." They show these auctions achieve full efficiency in equilibrium under various crossing-type conditions in a wide variety of settings. However, in order for the bidders to reach an equilibrium, they must be aware of one another's valuation functions, or at least, a non-trivial summary of them, violating the premise of private valuations. We, on the other hand, design mechanisms for fully private valuations, and achieve stronger incentive guarantees. Follow-ups on the contingent bidding model appear in~\cite{RobuPIJ13,ItoP06}.

%\paragraph{Submodularity-type conditions.}
\textbf{Submodularity-type conditions.}
Prior to Eden et al.~\cite{EFFGK}, a weaker form of submodularity over signals appeared in Chawla, Fu and Karlin~\cite{CFK}, along with single-crossing, in the context of revenue maximization. It was also used in Eden et al.~\cite{EFFG} for a more general reduction from welfare maximization to revenue maximization, and in Eden et al.~\cite{EdenFTZ21} to help bound the  price of anarchy of simultaneous auctions in the IDV model. % \kgnote{Add \cite{EFFGK}?}

%\paragraph{Removing assumptions.} 
\textbf{Removing assumptions.} 
A recent line of work in the EconCS literature study removing assumptions in the IDV model. Roughgarden and Talgam-Cohen~\cite{RTC} study prior-independent mechanisms for revenue maximization under matroid feasibility constraints, and Li~\cite{Li17} shows that the VCG mechanism with monopoly reserves gets an $1/e$-fraction of the optimal revenue. Chawla, Fu and Karlin~\cite{CFK} study approximately-optimal mechanisms for revenue maximization under minimal assumptions. Eden et al.~\cite{EFFG} study welfare and revenue maximization in the absence of a strict single-crossing condition.  Eden et al.~\cite{EFFGK,EdenFTZ21} bypass the impossibility results of Dasgupta and Maskin~\cite{DM00} and Jehiel and Moldovanu~\cite{JM69} in combinatorial auctions settings by considering approximation to the optimal welfare instead of full efficiency. %\kgnote{last sentence vague?}

%\subsection{Road Map}
%\kgnote{Add a road map of the paper here.}

\section{Preliminaries} \label{sec:prelims}
We consider the setting where a single item is being auctioned off and bidders have \emph{private} interdependent valuations.  That is, each of the $n$ bidders has a private signal $s_1, \dots, s_n$, with $s_i$ in a bounded signal space $S_i$. We assume signals are bounded and in $[0,1]$.\footnote{Assuming signals are in $[0,1]$ is without loss for bounded signals, as one can redefine the valuation functions such that this is the case.} Additionally, bidder $i$ has a valuation function $v_i:[0,1]^n\rightarrow\mathbb{R}^+$ which (1) maps signals of the bidders to a real value and (2) is a part of $i$'s \textit{private} information. The function $v_i$ is monotone non-decreasing in all signals, and strictly increasing in $i$'s signal. Let $\s$ ($\v$) denote the vector of all signals (valuations), and $\s_{-i}$ ($\v_{-i}$) denote the vector of all signals (valuations) without agent $i$'s signal (valuation). Let $S$, $V_i$, and $V$ denote the spaces of $\s$, $v_i$, and $\v$ respectively. For an integer $\ell$, let $[\ell]=\{1,\ldots,\ell\}$. 
%(1) is his own \emph{private} information and (2) depends on the signals of all bidders. 
%%% DELETED -- BAYESIAN
%Assume the signals are jointly drawn from distribution $F$ with density function $f$, the valuations are jointly drawn from distribution $G$ with density function $g$.

Our goal is to design a mechanism that maximizes social welfare. A mechanism $M = (x,p)$ consists of an allocation rule $x$ and a payment rule $p$. The mechanism first asks each bidder to submit a bid---without loss of generality, we consider direct mechanisms in which the bidders report a bids for their signals $\tilde{s_i}$ and valuation functions $\tilde{v_i}$---and then the mechanism allocates the item with probability $x_i(\tilde{\v}, \tilde{\s})$ to bidder $i$ and charges them $p_i(\tilde{\v}, \tilde{\s})$.  %\footnote{Our mechanisms can actually be implemented using queries to the valuation functions, and do not require agents to fully disclose their valuation functions.} 
An agent's utility from a bid profile $\tilde{\v},\tilde{\s}$ when $i$'s real valuation is $v_i$, and the real signal profile is $\s$ is $x_i(\tilde{\v},\tilde{\s})v_i(\s)-p_i(\tilde{\v},\tilde{\s})$.

%Two solution concepts are considered.
We use the following solution concept, which is the strongest for the interdependent setting.
\begin{definition}[Ex-post IC and ex-post IR] \label{def:EPIC-IR}
	A mechanism is ex-post IC if the bidder will not regret truthfully reporting after seeing other bidders' true signals, formally, for all true profiles $\v,\s$, bidders $i$, and reports $v'_i,s'_i$,
	$$
	v_i(\s) x_i(\s, \v) - p_i(\s, \v) \ge v_i(\s) x_i(\s_{-i}, s_i', \v_{-i}, v_i') - p_i(\s_{-i}, s_i', \v_{-i}, v_i').
	$$
	Similarly, a mechanism is ex-post IR if 
	$
	v_i(\s) x_i(\s, \v) - p_i(\s, \v) \ge 0.
	$
	
	We use EPIC-IR to denote a mechanism that is both ex-post IC and ex-post IR.
\end{definition}

\subsection{Standard Conditions on Valuations}
In this paper we focus on valuations that are \textit{submodular over signals} (SOS) introduced by~\cite{EFFGK} in the context of combinatorial auctions. 
\begin{definition}[Submodularity over signals]
	A valuation $v:S\rightarrow\mathbb{R}^+$ is said to be \emph{submodular over signals} if for every $i$, $\s_{-i}$ and $\s'_{-i}$ such that $\s_{-i}$ is coordinate-wise %smaller or equal than 
	smaller than or equal to
	$\s'_{-i}$, every $s_i$ and $\delta >0$, 
	\begin{eqnarray}
		v(s_i+\delta,\s_{-i}) - v(s_i,\s_{-i}) \ \geq \ v(s_i+\delta,\s'_{-i}) - v(s_i,\s'_{-i}). 
	\label{def:sos} \end{eqnarray} 	
\end{definition}
Roughly speaking, this condition comes to represent situations where valuations have diminishing marginal returns in signals.

We also discuss a widely-used condition called \textit{a single-crossing} condition.
\begin{definition}[Single-crossing] \label{def:sc}
	A valuation profile $\v$ is said to satisfy the \emph{single-crossing} (SC) condition if for every $i,j$, every $s_i,\s_{-i}$ and every $\delta>0$,
	\begin{eqnarray}
		v_i(\s_i+\delta,\s_{-i}) - v_i(\s_i,\s_{-i}) \ \ge \ v_j(\s_i+\delta,\s_{-i}) - v_j(\s_i,\s_{-i}).
	\end{eqnarray}
\end{definition}
Roughly speaking, this condition represents situations where $i$'s signal affects $i$'s valuation more than the valuation of any other agent $j$. The importance of this condition is comes from the fact that this condition (or similar relaxations of it) guarantees the existence of socially optimal mechanisms in the case where valuations are public, and only signals are private, in the context of single-item auctions~\cite{maskin1992,ausubel1999generalized} and multi-item auctions under matroid feasibility constraints~\cite{ausubel2004efficient,RTC,Li17}.

%%%%% BAYESIAN
\begin{comment}
\begin{definition}[Bayesian IC and interim IR] A mechanism is Bayesian IC if 
	$$
	\E_{s_{-i}, v_{-i}}\left[v_i(\s) x_i(\s, \v) - p_i(\s, \v) \ge v_i(\s)\right] \ge \E_{s_{-i}, v_{-i}}\left[ x_i(\s_{-i}, s_i', \v_{-i}, v_i') - p_i(\s_{-i}, s_i', \v_{-i}, v_i')\right].
	$$
	A mechanism is interim IR if 
	$$
	\E_{s_{-i}, v_{-i}}\left[v_i(\s) x_i(\s, \v) - p_i(\s, \v)\right] \ge 0.
	$$
\end{definition}
\end{comment}

\subsection{Truthfulness Characterization and Implications.} \label{sec:truthful-char}
The following propositions, whose proofs are deferred to Section~\ref{sec:epicir-proof}, give strong restrictions on the form of any EPIC-IR mechanism in our setting.  While in hindsight it makes sense that we can only use bidder $i$'s real-numbered value to determine his allocation---and not his signal or valuation function, each of which he could manipulate to impact other agents then compensate for---the steps to derive this are in fact somewhat more complicated than in Myerson~\cite{Myerson} and the public valuation IDV analogue from Roughgarden Talgam-Cohen~\cite{RTC}.

We first give a sufficient condition for an allocation rule to be implementable.
\begin{proposition}[Sufficient conditions]\label{prop:epicir_suf}
	Consider a set of functions $\{x_i: V\times S \to \mathbb{R}\}$. %A mechanism that for bidder reports $\v,\s$, allocates each bidder with probability $x_i(\v_{-i},\s_{-i},v_i(\s))$ can be implemented in an EPIC-IR way. \alon{(IFF??)}
	There exists a payment rule $\{p_i: V\times S \to \mathbb{R}\}$ so that $(x,p)$ is EPIC-IR %A mechanism is EPIC-IR 
	if for every $i$, $\v_{-i},\s_{-i}$, 
	\begin{enumerate}[label=(\roman*)]
	    \item If $v_i(\s_{-i}, s_i) = \hat v_i(\s_{-i}, \hat s_i)$ then $x_i(\v_{-i}, v_i,\s_{-i}, s_i) = x_i(\v_{-i},\hat v_i,\s_{-i}, \hat s_i)$.  That is, we may as well write $x_i(\v_{-i},\s_{-i}, v_i(\s))$ as $x_i$ is dependent only on these three parameters.
	   \item $x_i(\v_{-i},\s_{-i},\cdot)$ is monotone non-decreasing.
	\end{enumerate}
	%Consider a set of functions $\{x_i: V_{-i}\times S_{-i}\times \mathbb{R}^+\mapsto [0,1]\}_{i\in [n]}$ which for every $i$, $\v_{-i},\s_{-i}$, $x_i(\v_{-i},\s_{-i},\cdot)$ is monotone non-decreasing. A mechanism is EPIC-IR if and only if 

\end{proposition}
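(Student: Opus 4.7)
The plan is to exhibit a Myerson-style payment rule and verify it works. Using condition (i), write $\tilde{x}_i(\v_{-i}, \s_{-i}, u) := x_i(\v_{-i}, v_i, \s_{-i}, s_i)$ for any $(v_i, s_i)$ with $v_i(\s_{-i}, s_i) = u$; this is well-defined by (i). I propose the payment
\[
p_i(\v, \s) \;=\; v_i(\s)\cdot \tilde{x}_i(\v_{-i}, \s_{-i}, v_i(\s)) \;-\; \int_0^{v_i(\s)} \tilde{x}_i(\v_{-i}, \s_{-i}, t)\, dt ,
\]
which itself satisfies condition (i), so I can likewise write $\tilde{p}_i(\v_{-i}, \s_{-i}, u)$ for this quantity.

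Next I would verify EPIC. Fix truthful $(v_i, s_i)$ and let the agent consider a deviation $(v_i', s_i')$; by condition (i), the only effect of the deviation on the mechanism is through the reported ``effective value'' $u' := v_i'(\s_{-i}, s_i')$, and since the agent may choose any $v_i'$ monotone increasing in $s_i$, the achievable set of $u'$ is all of $[0, \infty)$. The deviation utility becomes
\[
U(u') \;=\; v_i(\s)\cdot \tilde{x}_i(\v_{-i}, \s_{-i}, u') - \tilde{p}_i(\v_{-i}, \s_{-i}, u') \;=\; (v_i(\s) - u')\, \tilde{x}_i(u') + \int_0^{u'} \tilde{x}_i(t)\, dt ,
\]
abbreviating $\tilde{x}_i(u) := \tilde{x}_i(\v_{-i}, \s_{-i}, u)$. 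The standard Myerson comparison, splitting into the cases $u' > v_i(\s)$ and $u' < v_i(\s)$, rewrites $U(v_i(\s)) - U(u')$ as an integral of $\tilde{x}_i(t) - \tilde{x}_i(u')$ (or its negation) over the interval between $u'$ and $v_i(\s)$; each is non-negative by monotonicity of $\tilde{x}_i$ in its last argument (condition (ii)). Hence $U$ is maximized at $u' = v_i(\s)$, so truthful reporting is optimal, establishing EPIC.

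For ex-post IR, at truthful reports the utility equals $\int_0^{v_i(\s)} \tilde{x}_i(t)\, dt$, which is non-negative because $\tilde{x}_i \in [0,1]$ and $v_i(\s) \ge 0$.

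The main subtlety (rather than obstacle) is not the Myerson calculus itself but making sure that condition (i) genuinely collapses the two-dimensional report $(v_i', s_i')$ into a one-dimensional ``effective type'' that ranges over all of $\mathbb{R}^+$, so that the usual single-parameter argument applies to the joint deviation; this is exactly what (i) is designed to enforce, and the well-definedness of $\tilde{x}_i$ and $\tilde{p}_i$ is the lemma that makes the reduction rigorous. Once that reduction is in place, the EPIC and EPIR arguments are essentially the same as Myerson's, now invoking monotonicity of the allocation in the effective type rather than in the reported signal.
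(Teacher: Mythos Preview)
Your proposal is correct and follows essentially the same approach as the paper: you use the identical Myerson-style payment rule $p_i(\v,\s)=v_i(\s)\,\tilde{x}_i(v_i(\s))-\int_0^{v_i(\s)}\tilde{x}_i(t)\,dt$, establish EPIR from nonnegativity of the integral, and prove EPIC by the same two-case comparison using monotonicity of $\tilde{x}_i$. Your explicit framing of condition (i) as collapsing the joint report $(v_i',s_i')$ to a one-dimensional effective type $u'$ is a nice way to articulate what the paper does implicitly.
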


We further show that this sufficient condition is almost necessary.
\begin{proposition}[Necessary conditions]\label{prop:epicir_nec}
	Consider a set of functions $\{x_i: V\times S \to \mathbb{R}\}$. %A mechanism that for bidder reports $\v,\s$, allocates each bidder with probability $x_i(\v_{-i},\s_{-i},v_i(\s))$ can be implemented in an EPIC-IR way. \alon{(IFF??)}
	There exists a payment rule $\{p_i: V\times S \to \mathbb{R}\}$ so that $(x,p)$ is EPIC-IR only if for every $i$, $\v_{-i},\s_{-i}$, 
	\begin{enumerate}
		\item For every $v_i,s_i$ $\hat{v}_i,\hat{s}_i$ such that $v_i(s_i,\s_{-i})>\hat{v}_i(\hat{s}_i, \s_{-i})$, $x_i(\v_{-i},\s_{-i},v_i,s_i)\geq x_i(\v_{-i},\s_{-i},\hat{v}_i,\hat{s}_i)$.
	    \item %If $v_i(\s_{-i}, s_i) = \hat v_i(\s_{-i}, \hat s_i)$ then $x_i(\v_{-i}, v_i,\s_{-i}, s_i) = x_i(\v_{-i},\hat v_i,\s_{-i}, \hat s_i)$.  That is, we may as well write $x_i(\v_{-i},\s_{-i}, v_i(\s))$ as $x_i$ is dependent only on these three parameters.
	    $x_i(\v_{-i}, v_i,\s_{-i}, s_i)$ almost only depends on $v_i(\s)$ for fixed $\v_{-i}, \s_{-i}$. More specifically, define function $x_{v_i}: \mathbb{R}^+ \to \mathbb{R}^+$ with $$x_{v_i}(y) = x_i(\v_{-i}, v_i, s_{-i}, v_i^{-1}(y)),$$ which maps a possible valuation $y$ of bidder $i$ to the corresponding allocation $x_i$ when bidder $i$'s valuation function is $v_i$. We assume that $x_{v_i}$ is an integrable function on $[v_i(0, \s_{-i}), +\infty)$ for all $v_i \in V_i$.  Then for any $v_i, \hat v_i \in V_i$, defining $L = \max\{v_i(0, \s_{-i}),\hat v_i(0,\s_{-i})\}$,
    \begin{align*}
    x_{v_i}(\cdot) = x_{\hat v_i} (\cdot) \text{ almost everywhere on } [L, +\infty). 	
    \end{align*}
%	   \item $x_i(\v_{-i},\s_{-i},\cdot)$ is monotone non-decreasing.
	\end{enumerate}
	%Consider a set of functions $\{x_i: V_{-i}\times S_{-i}\times \mathbb{R}^+\mapsto [0,1]\}_{i\in [n]}$ which for every $i$, $\v_{-i},\s_{-i}$, $x_i(\v_{-i},\s_{-i},\cdot)$ is monotone non-decreasing. A mechanism is EPIC-IR if and only if 

\end{proposition}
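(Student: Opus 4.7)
The plan is to recover both conditions by reducing to standard single-parameter Myerson-style arguments, but with the agent's \emph{realized value} $y = v_i(\s)$ (for fixed $\v_{-i}, \s_{-i}$) playing the role of the one-dimensional type. Throughout, I fix $\v_{-i}$ and $\s_{-i}$ and suppress them from the notation.

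For part 1, I would run the textbook exchange argument. Let $y := v_i(s_i, \s_{-i}) > \hat y := \hat v_i(\hat s_i, \s_{-i})$. Since both $(v_i, s_i)$ and $(\hat v_i, \hat s_i)$ are valid reports, EPIC gives two inequalities: an agent with true type $(v_i, s_i)$ (true value $y$) does not want to report $(\hat v_i, \hat s_i)$, and an agent with true type $(\hat v_i, \hat s_i)$ (true value $\hat y$) does not want to report $(v_i, s_i)$. Adding the two inequalities cancels the payments and leaves $(y - \hat y)\bigl(x_i(v_i, s_i) - x_i(\hat v_i, \hat s_i)\bigr) \ge 0$, which, together with $y > \hat y$, yields the desired allocation inequality.

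For part 2, I would leverage part 1 to apply Myerson's payment identity. Fix $v_i \in V_i$; as $s_i$ varies over $[0,1]$, the value $y = v_i(s_i, \s_{-i})$ traces out $[v_i(0,\s_{-i}),\, v_i(1,\s_{-i})]$, and by part~1, $x_{v_i}(y)$ is monotone non-decreasing in $y$. The standard single-parameter Myerson analysis (using the integrability assumption) then forces the interim utility $u_{v_i}(y) := y\, x_{v_i}(y) - p_{v_i}(y)$ to satisfy
\begin{equation*}
u_{v_i}(y) \;=\; u_{v_i}(v_i(0,\s_{-i})) \;+\; \int_{v_i(0,\s_{-i})}^{y} x_{v_i}(t)\, dt,
\end{equation*}
and analogously for $\hat v_i$. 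Now for any $y \ge L = \max\{v_i(0,\s_{-i}), \hat v_i(0,\s_{-i})\}$, both $v_i$ and $\hat v_i$ admit a signal realizing value exactly $y$, so EPIC in both directions between these two reports (for an agent whose true value is $y$) yields $u_{v_i}(y) = u_{\hat v_i}(y)$; in particular this holds at $y = L$. Subtracting the two integral identities at $L$ and at $y$ then collapses to $\int_L^y x_{v_i}(t)\, dt = \int_L^y x_{\hat v_i}(t)\, dt$ for all $y \ge L$, and the Lebesgue differentiation theorem gives $x_{v_i} = x_{\hat v_i}$ almost everywhere on $[L, +\infty)$.

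The main obstacle is justifying the Myerson payment identity in this two-dimensional report model, where the agent can jointly misreport $(v_i, s_i)$ rather than a single real-valued type. The resolution is the key observation that once EPIC is imposed over all pairs, fixing $v_i$ and varying $s_i$ carves out a one-dimensional monotone subproblem parametrized by $y$ to which the classical identity applies verbatim; the cross-valuation comparison then enters only through the boundary equality $u_{v_i}(L) = u_{\hat v_i}(L)$ and the pointwise equality of utilities above $L$, which together are just strong enough to equate the two integrated allocation curves.
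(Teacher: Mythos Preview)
Your proposal is correct and follows essentially the same route as the paper: the exchange argument for part~1, and for part~2 the Myerson/RTC payment identity applied within each fixed $v_i$, combined with the EPIC cross-report argument showing $u_{v_i}(y)=u_{\hat v_i}(y)$ whenever both functions can realize value $y$, followed by the Lebesgue differentiation theorem. The only cosmetic difference is that the paper phrases the integral comparison as ``the difference is a constant'' and then differentiates, whereas you pin the constant to zero via the equality at $y=L$; the content is identical.
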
 

To prove the sufficient direction of the above proposition, we use the following payment rule, which along with an allocation that satisfies conditions (i) and (ii) above, results in an EPIC-IR mechanism. 
\begin{eqnarray}
	p_i(\v,\s) = x_i(\v_{-i},\s_{-i},v_i(\s))v_i(\s) - \int_{0}^{v_i(\s)}x_i(\v_{-i},\s_{-i},t) dt.\label{eq:payment}
\end{eqnarray}
We refer to this payment rule in our mechanisms' description.

An immediate implication of the above proposition is the strong impossibility for single-crossing valuations which, unlike in the public valuation setting, do not give any non-trivial guarantee here. %We defer the proof to Appendix~\ref{sec:sc-n-proof}.
\begin{proposition}
	No randomized EPIC-IR mechanism can do better than an $n$-approximation under the single-crossing condition when valuation functions are private. \label{prop:sc-n}
\end{proposition}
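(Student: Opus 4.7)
The plan is to apply the monotonicity clause in Proposition~\ref{prop:epicir_nec} to a carefully chosen pair of type profiles, and then let the approximation guarantee clash with feasibility.

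First I would exhibit a concrete hard instance. Take $n$ bidders with signal spaces $S_i=[0,1]$ and valuations
\[
v_i(\s) \;=\; \prod_{j=1}^n s_j \;+\; \delta\, s_i,
\]
for an arbitrarily small $\delta>0$ chosen to make $v_i$ strictly increasing in $s_i$. A direct derivative check shows that the $s_i$-marginal of $v_i$ exceeds that of every $v_j$ ($j\neq i$) by exactly $\delta$, so the profile satisfies single-crossing. At the true signal vector $\s=(1,\dots,1)$ every bidder has value $1+\delta$, so $\opt=1+\delta$.

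Second, for each bidder $i$ introduce the ``decoy'' type $(v_i',s_i')=\bigl(\epsilon+s_i,\,0\bigr)$ with $\delta\ll\epsilon<1$. The mixed profile $(v_i',\v_{-i})$ also satisfies SC: the $s_i$-marginal of $v_i'$ is $1\ge\prod_{k\neq i}s_k$, and in any $s_j$-direction ($j\neq i$) the dominant marginal is still $v_j$'s. Consider the alternative instance $A_i$ in which bidder $i$'s true type is this decoy while all others remain truthful. The mechanism then observes value $v_i'(0,\s_{-i})=\epsilon$ for bidder $i$ and $v_j(0,\s_{-i})=\delta$ for each $j\neq i$, so $\opt_{A_i}=\epsilon$. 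Writing $x_i^{A_i}$ for $i$'s allocation on $A_i$ and using $\sum_{j\neq i}x_j^{A_i}\le 1$, any $\alpha$-approximate mechanism must satisfy
\[
x_i^{A_i}\,\epsilon+\delta \;\ge\; x_i^{A_i}\,\epsilon+\sum_{j\neq i}x_j^{A_i}\,\delta \;\ge\; \epsilon/\alpha,
\]
that is, $x_i^{A_i}\ge \tfrac{1}{\alpha}-\tfrac{\delta}{\epsilon}$.

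The crucial step is to transport this bound back to the truthful instance $T$ via condition~(1) of Proposition~\ref{prop:epicir_nec}. Passing from $A_i$ to $T$ changes only bidder $i$'s own report, so the pair $(\v_{-i},\s_{-i})$ is identical in both instances, and the truthful observed value $v_i(\s)=1+\delta$ strictly exceeds the decoy's $\epsilon$. Monotonicity in the observed value thus gives $x_i^{T}\ge x_i^{A_i}\ge \tfrac{1}{\alpha}-\tfrac{\delta}{\epsilon}$ for every $i$. Summing over $i$ and using feasibility $\sum_i x_i^T\le 1$ yields $n\bigl(\tfrac{1}{\alpha}-\tfrac{\delta}{\epsilon}\bigr)\le 1$, equivalently $\alpha\ge n/(1+n\delta/\epsilon)$. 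Sending $\delta/\epsilon\to 0$ gives $\alpha\ge n$.

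The only real subtleties are (i) keeping the decoy deviation admissible within the SC class (handled by the derivative checks above) and (ii) ensuring the $\delta$-perturbation needed for strict monotonicity does not degrade the bound, which is handled by working in the $\delta\ll\epsilon$ regime. The rest is a one-shot application of the monotonicity characterization; no delicate payment analysis is needed because the characterization already absorbs the relevant IC constraints.
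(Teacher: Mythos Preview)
Your argument is correct and uses the same ingredients as the paper: the product valuation $v_i(\s)=\prod_j s_j$ at $\s=\mathbf{1}$, the affine ``decoy'' $v_i'=c+s_i$ at $s_i=0$, and the monotonicity clause of Proposition~\ref{prop:epicir_nec} with $(\v_{-i},\s_{-i})$ held fixed. The only structural difference is the direction in which you run the argument. The paper first pigeonholes in the symmetric instance to find a single agent with allocation at most $1/n$, then transports that bound \emph{down} to the decoy instance (where only that agent has positive value) to conclude the approximation is at least $n$. You instead assume an $\alpha$-approximation on all $n$ decoy instances, transport each agent's allocation bound \emph{up} to the symmetric instance, and let the sum clash with feasibility. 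These are contrapositives of one another; your version requires $n$ decoy profiles rather than one and a bit more $\delta/\epsilon$ bookkeeping (because you keep the non-decoy agents' values at $\delta>0$ rather than exactly zero), but it has the virtue of being explicit about strict monotonicity in $s_i$, which the paper handles only via a footnote.
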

\begin{proof}
	Consider two cases:\footnote{For readability, the in the following example, the valuations are not strictly increasing in the agents' own signals as defined above. This can be modified easily by adding an additive $\epsilon s_i$ to each agent $i$'s valuation.}
	\begin{itemize}
		\item \textbf{Case 1.} $n$ agents with identical valuation $v_i(\s)=\prod_{j=1}^n s_j$,  and $\s=\mathbf{1}$. The valuations satisfy single-crossing. There must be an agent who's allocation probability is at most $1/n$. Assume without loss it is agent $1$.
		\item \textbf{Case 2.} $v'_1=0.5+s_1$, and $v_2,\ldots, v_n$ are as before. Note that single-crossing is still satisfied. The signals are $s'_1=0$, $\s_{-1}=\textbf{1}$. 
	\end{itemize} 
	Since in both cases $\s_{-1}$ and $\v_{-1}$ are the same, and $v'_1(s'_1,\s_{-1})< v_1(s_1,\s_{-1})$, by Proposition~\ref{prop:epicir_nec}, in Case 2, agent $1$'s allocation probability is at most $1/n$. Since agent 1 is the only agent with non-zero value, this implies any EPIC-IR mechanism can guarantee at most $1/n$ of the optimal welfare.
\end{proof}

%\section{Single Crossing is not Enough}
%\input{sc-impossibility}
%\input{impossibility}
\section{$O(\log^2 n)$-approximation under Submodularity}\label{sec:sos}

In this section, we consider the case where every $v_i(\cdot)$ is submodular over signals (see Definition~\ref{def:sos}). We assume without loss of generality that signals are in the $[0,1]$ interval.\footnote{One can always rescale signals and adjust valuations so that this is the case. For instance, with valuation $v_i(\cdot)$ and signal space $S_i = [\underline{s}_i,\overline{s}_i]$, we can define a new valuation function $\tilde{v}_i(\cdot)$ based on $v_i(\cdot)$ that uses signal space $\tilde{S}_i = [0,1]$: for any $\s_{-i}$, $\tilde{v}_i(s_i,\s_{-i}) = v_i(\underline{s}_i+s_i(\overline{s}_i-\underline{s}_i),\s_{-i})$.} In this section, we devise an EPIC-IR mechanism that gives $O(\log^2 n)$-approximation to the optimal social welfare. We first provide intuition for our mechanism.

\paragraph{High-level intuition.} Devising EPIC-IR mechanisms for our setting is very tricky. On one hand, the private type of each agent is \emph{multi-dimensional}, as it contains the space of all submodular valuation functions. On the other, our truthfulness characterization (Propositions~\ref{prop:epicir_suf}, \ref{prop:epicir_nec}) implies that the mechanism can basically only take into account a \emph{single-dimensional summary} of $i$'s private type, $v_i(\s)$, in order to determine agent $i$'s allocation probability.  This also implies that $v_j(\s)$ cannot be used to determine agent $i$'s allocation probability for any $j \neq i$, as $\s$ uses $s_i$.  Together, these facts rule out many natural mechanisms as not EPIC-IR: e.g., allocating to the highest-valued agent (as shown in Example~\ref{example:sc-no-det}), or posting an anonymous price for the item (determined by using a sample of the agents) and then sequentially approaching the buyers to offer the item at this price.\footnote{Consider an agent that comes later in the sequential anonymous price mechanism. Such an agent can lower their reported signal to prevent an earlier agent from having a high enough value to purchase the item, and then compensate by ``increasing" their reported valuation function.}  It also rules out mechanisms that allocate to the bidder with the highest \emph{proxy} value, i.e.~using $v_i(s_A, 0_B)$ % where the mechanism determines $\hat v_i$ and the sets $A \sqcup B$ 
for some partition $A \sqcup B$ of the agents determined by the mechanism (as used in~\cite{EFFGK}).\footnote{The mechanism cannot use the real value/signals to charge the agent, as the agent might misreport those to get a lower payment. If the mechanism uses the proxy value, then if the proxy value is smaller than an agent's actual value, an agent might gain by misreporting a higher valuation, and increase the proxy value.} To make things worse, by Example~\ref{example:sos-no-det}, we must only consider randomized mechanisms.  

%These together 

%As a bidder $i$'s allocation is a function of $\s_{-i}$, $\v_{-i}$, and $v_i(s)$, a bidder can impact their own allocation probability only with their report of $v_i(s)$.  We do not consider sequential mechanisms due to the above reasoning. Moreover, 
%so each bidder must be set a (inherently randomized by Example~\ref{example:sos-no-det}) allocation probability exists simultaneously and these together must be feasible.

Remember that by Propositions~\ref{prop:epicir_suf}, \ref{prop:epicir_nec}, we consider allocation functions such that bidder $i$'s allocation is a function of $\s_{-i}$, $\v_{-i}$, and $v_i(\s)$. When setting $i$'s allocation function, the closer $i$ is to the highest-valued agent, that is, the fewer agents who have values above $i$, the higher probability of allocation we would like to give $i$. However, when determining $i$'s allocation, we are not allowed to use $s_i$, so we cannot know who the agents are with values higher than $i$.  Instead, we can only $\emph{estimate}$ which agents have a higher value than $i$ without using $s_i$ or $v_i(\cdot)$.  We do this by looking at the agents $j$ such that $v_j(1_i,\s_{-i}) > v_i(\s)$. % \sz{so that an agent cannot misreport to change our estimate}. \shuran{I think we probably should explain that we cannot know this due to truthfulness constraint? It is not very clear here why we cannot know but can estimate using $v_j(1_i,\s_{-i})$} 
This is roughly the same set of agents as $D_i$, defined in Equation~\eqref{eq:di} below. %\shuran{I moved the following sentences to the beginning: Intuitively, we want to get an agent $i$ a higher allocation probability if there are less potential bidders that might have a larger value than $i$. This is captured by Equation~\eqref{eq:allocation} below.}
 We set the allocation probability $x_i$ to be inversely proportional to the size of this set $D_i$ (Equation~\eqref{eq:allocation}). Ideally, this set should be small for the \textit{highest-valued} agent $i^*$ so that $i^*$ gets allocated with high probability. However, this might not be the case, and the set $D_{i^*}$ might be as large as $\Omega(n)$. To deal with such a case, we want to restrict the set $D_{i^*}$ to contain only agents that are close in value to agent $i^*$ (as depicted by the constraint that $v_j(0_i,\s_{-i})\geq (1-1/\log n)v_i(\s)$ in the definition of $D_i$).  Now, if these agents win the item instead, we know they're close enough in value to $i^*$ for a good approximation, so we just need to reason that these agents in $D_{i^*}$ also have large enough allocation probability to compensate for $i^*$'s small probability in this case. In our analysis, we show that the combined probability of agents not ``too far" from $i^*$'s value is large enough, hence guaranteeing a good approximation to welfare. This is shown in Section~\ref{sec:approx}.

% [took out more formal intuition and moved to 3.2]

In our setting, since we jointly define the allocation functions, proving feasibility is no less tricky. % to prove that our mechanism assigned a feasible allocation probability. 
Simply setting the allocation probability of agent $i$ to be $1/(|D_i|+1)$ might result in an infeasible mechanism. 
In Section~\ref{sec:feasible} we show how to define the allocation rule in order to ensure feasibility. At a very high-level, we show that the allocation rule is closely related to the coloring number of an adequately defined directed graph. This coloring number can then be bounded using submodularity properties of the valuation functions. This is the only place where submodularity is used in the proof.

\paragraph{The mechanism.}
%Consider the case where every $v_i$ is SOS, and assume WLOG $\s\in [0,1]^n$.\footnote{One can always rescale signals and adjust valuations so this is the case.} 
For each bidder $i$, we try to find the bidders who have higher valuations than $i$ without using $i$'s bid. %Without looking at $i$'s bid, we can know that a bidder $j$'s valuation must lie in range $[v_j(0_i, \s_{-i}), v_j(1_i,\s_{-i})],$ assuming that everyone reports truthfully. 
Define 
\begin{eqnarray}
	D_i(\tilde{\v},\tilde{\s}) \ =\  \left\{j \ : \ \tilde{v}_j(1_i,\tilde{\s}_{-i}) \geq \tilde{v}_i(\tilde{\s}) \ \wedge \ \tilde{v}_j(0_i, \tilde{\s}_{-i})\geq \left(1-\frac{1}{\log n}\right)\tilde{v}_i(\tilde{\s})\right\} =\  D_i(\tilde{\v}_{-i},\,\tilde{\s}_{-i}, \,\tilde{v}_i(\tilde{\s})) \ \label{eq:di}
\end{eqnarray}
	to be the bidders $j$ who have the highest possible valuation $\tilde{v}_j(1_i,\tilde{\s}_{-i})$ no less than $\tilde{v}_i(\tilde{\s})$ and the lowest possible valuation $\tilde{v}_j(0_i, \tilde{\s}_{-i})$ no less than $\left(1-1/\log n\right)\tilde{v}_i(\tilde{\s})$. The set $D_i$ is depicted in Figure~\ref{fig:di}.

Consider the following mechanism:
\begin{enumerate}
	\item Elicit bids $(\tilde{v}_i,\tilde{s}_i)$ for every agent $i$.\footnote{We later show how the mechanism can be implemented using only $n(2n-1)$ calls to the valuation oracles of the agents.}
	%\item For every $i$, let $$D_i(\tilde{\v},\tilde{\s}) \ =\  D_i(\tilde{\v}_{-i},\tilde{\s}_{-i}, \tilde{v}_i(\tilde{\s})) \ =\  \left\{j \ : \ \tilde{v}_j(1_i,\tilde{\s}_{-i}) \geq \tilde{v}_i(\tilde{\s}) \wedge \tilde{v}_j(0_i, \tilde{\s}_{-i})\geq \left(1-1/\log n\right)\tilde{v}_i(\tilde{\s})\right\}.$$
	\item Each agent $i$ is allocated with probability 
	\begin{eqnarray}
		x_i(\tilde{\v},\tilde{\s})\ =\ \frac{1}{\chi}
	\frac{1}{\ln n + 1}\frac{1}{|D_i(\tilde{\v}_{-i},\tilde{\s}_{-i}, \tilde{v}_i(\tilde{\s}))| + 1}=\ x_i(\tilde{\v}_{-i},\,\tilde{\s}_{-i}, \,\tilde{v}_i(\tilde{\s})) \ ,\label{eq:allocation}
\end{eqnarray}
	where $\chi\geq 1$ is a constant. We later show that setting $\chi=2\log n+1$ ensures feasibility.  %to be determined later. %\shuran{should we directly set $\chi$ equal to $2\log n+1$ because the agents can misreport to change $\chi$?} 
	Charge payments according to Equation~\eqref{eq:payment}.
\end{enumerate}

%Consider the following mechanism: bidders submit valuations $\tilde{\v}$ and signals $\tilde{\s}$. Bidder $i$ is allocated with probability 
%\begin{eqnarray}
%	\alpha_i = \frac{1}{\chi}
%	\frac{1}{\ln n + 1}\frac{1}{|D_i(\tilde{\v},\tilde{\s})| + 1},\label{eq:allocation}
%\end{eqnarray}
%where $\chi\geq 1$ is to be determined later.  

We show that the mechanism satisfies the EPIC-IR conditions in Proposition~\ref{prop:epicir_suf}.  %\kgnote{Fill in props, $\chi$?}

\begin{lemma} 
	The mechanism can be implemented in an ex-post IC-IR manner. \label{lem:main-epicir}
\end{lemma}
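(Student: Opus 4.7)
The plan is to verify both sufficient conditions of Proposition~\ref{prop:epicir_suf} for the allocation rule in Equation~\eqref{eq:allocation}, and then invoke that proposition to conclude that the payment rule in Equation~\eqref{eq:payment} makes the mechanism EPIC-IR.

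First I would address condition (i), namely that the allocation depends on bidder $i$'s report only through the scalar $v_i(\s)$. This is essentially built into the definitions: inspecting Equation~\eqref{eq:di}, the set $D_i(\tilde{\v},\tilde{\s})$ is determined by $\tilde{\v}_{-i}$, $\tilde{\s}_{-i}$, and the scalar $\tilde{v}_i(\tilde{\s})$ (the two defining inequalities involve $\tilde{v}_i(\tilde{\s})$ as a number, not the function $\tilde{v}_i$ or the signal $\tilde{s}_i$ separately). Consequently $x_i$ in Equation~\eqref{eq:allocation} likewise depends on $i$'s report only through $\tilde{v}_i(\tilde{\s})$, establishing (i). This step is essentially bookkeeping rather than an obstacle.

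Next I would verify condition (ii), monotonicity of $x_i(\v_{-i},\s_{-i},\cdot)$ in the argument $y := \tilde{v}_i(\tilde{\s})$. Since $\chi$ and $\ln n + 1$ are constants and $x_i = \frac{1}{\chi(\ln n + 1)(|D_i|+1)}$, it suffices to argue that the integer $|D_i(\v_{-i},\s_{-i},y)|$ is non-increasing in $y$. Fix $\v_{-i},\s_{-i}$ and consider $y \le y'$. Both conditions defining membership of $j$ in $D_i$, namely $\tilde{v}_j(1_i,\tilde{\s}_{-i}) \ge y$ and $\tilde{v}_j(0_i,\tilde{\s}_{-i}) \ge (1-1/\log n)\, y$, become (weakly) harder to satisfy as $y$ increases. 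Hence $D_i(\v_{-i},\s_{-i},y') \subseteq D_i(\v_{-i},\s_{-i},y)$, so $|D_i|$ is non-increasing in $y$ and $x_i$ is non-decreasing in $y$, establishing (ii).

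Having checked (i) and (ii), I would conclude by citing Proposition~\ref{prop:epicir_suf}: the payment rule in Equation~\eqref{eq:payment} combined with the allocation rule in Equation~\eqref{eq:allocation} yields an ex-post IC and ex-post IR mechanism. I don't anticipate a real obstacle here since the allocation was purpose-built to factor through $\tilde{v}_i(\tilde{\s})$ and to be monotone in it; the only subtle point is checking that both inequalities in the definition of $D_i$ tighten (rather than one tightening and the other loosening) as $\tilde{v}_i(\tilde{\s})$ grows, which is immediate once one notes that the right-hand side coefficient $(1-1/\log n)$ is positive.
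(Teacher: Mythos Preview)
Your proposal is correct and follows essentially the same approach as the paper's proof: both verify the hypotheses of Proposition~\ref{prop:epicir_suf} by observing that $D_i$ depends on $i$'s report only through the scalar $\tilde{v}_i(\tilde{\s})$ and that $|D_i|$ is non-increasing in this scalar, hence $x_i$ is non-decreasing. Your write-up is simply more explicit in separately checking condition~(i) and in noting that both defining inequalities of $D_i$ tighten as $y$ grows.
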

\begin{proof}
	Fix $\tilde{\v}_{-i},\tilde{\s}_{-i}$. Notice that as $\tilde{v}_i(\tilde{s})$ increases, the size of the set $D_i(\tilde{\v}_{-i},\tilde{\s}_{-i}, \tilde{v}_i(\tilde{\s}))$ can only decrease. Since $x_i$ is inversely proportioned to $|D_i(\tilde{\v}_{-i},\tilde{\s}_{-i}, \tilde{v}_i(\tilde{\s}))|$, we get that $x_i(\tilde{\v}_{-i},\tilde{\s}_{-i}, \cdot)$ is monotonically non-decreasing in the last parameter, satisfying the properties of Proposition~\ref{prop:epicir_suf}, which yields the lemma.
%	Let us define the set $$\hat{D}_i(\v_{-i},\s_{-i}, x) = \left\{j \ : \ v_j(1_i,\s_{-i}) \geq x \wedge v_j(0_i, \s_{-i})\geq \left(1-1/\log n\right)x\right\}.$$ Notice that $\hat{D}_i(\v_{-i},\s_{-i}, v_i(\s)) = D_i(\v,\s)$. Moreover, notice that for a fixed $i,\v_{-i},\s_{-i}$, as $x$ increases, $|\hat{D}_i(\v_{-i},\s_{-i}, x)|$ can only decrease.
%	Let 
%	$$x_i(\v_{-i},\s_{-i},x) = \frac{1}{\chi}
%	\frac{1}{\ln n + 1}\frac{1}{|\hat{D}_i(\v_{-i},\s_{-i}, x)| + 1}.$$
%	Since for a fixed $i,\v_{-i},\s_{-i}$, $|\hat{D}_i(\v_{-i},\s_{-i}, \cdot)|$ is non-increasing, $x_i(\v_{-i},\s_{-i},\cdot)$ is non-decreasing, satisfying the conditions of Proposition~\ref{prop:epicir}, and can be implemented in an EPIC-IR manner. Moreover, for $\tilde{\v},\tilde{\s}$, $$x_i(\tilde{\v}_{-i},\tilde{\s}_{-i},\tilde{v}_i(\tilde{\s})) \ =\  \frac{1}{\chi}
%	\frac{1}{\ln n + 1}\frac{1}{|\hat{D}_i(\tilde{\v}_{-i},\tilde{\s}_{-i}, \tilde{v}_i(\tilde{\s}))| + 1} \ =\  \frac{1}{\chi}
%	\frac{1}{\ln n + 1}\frac{1}{|D_i(\tilde{\v},\tilde{\s})| + 1} = \alpha_i.$$
%	Therefore, $x_i$ implements the allocation rule of the mechanism, which concludes the proof.
\end{proof}

Since our mechanism is ex-post IC-IR, we assume that every bidder reports truthfully in the rest of the section. 
For ease of notation, we fix truthful bids $(\v,\s)$, and drop $\v$ and $\s$ from the notation when clear from context. The challenge in proving the mechanism performs well is in setting $\chi$ just right so that the mechanism is both feasible (the sum of allocation probabilities is less than $1$) \textit{and} has a good approximation guarantee. In Section~\ref{sec:feasible} we tie the value of $\chi$ to the coloring number of an appropriately defined graph. In Section~\ref{sec:approx}, we prove the desired approximation guarantee.

\begin{figure}[h!]
	\centering
	\begin{subfigure}{0.45\textwidth} 
		\centering
		\includegraphics[width=\textwidth]{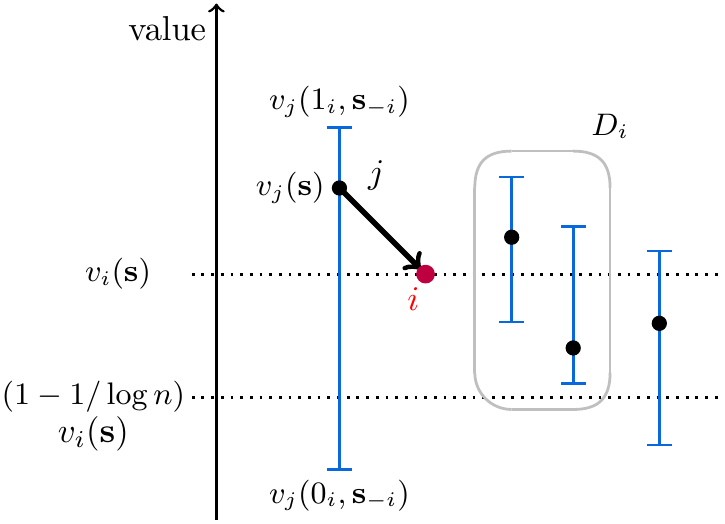}	
		\caption{Edge $j\to i$ and set $D_i$.}
		\label{fig:di}
	\end{subfigure}
	~\hfill
	\begin{subfigure}{0.45\textwidth} 
		\centering
		\includegraphics[width=\textwidth]{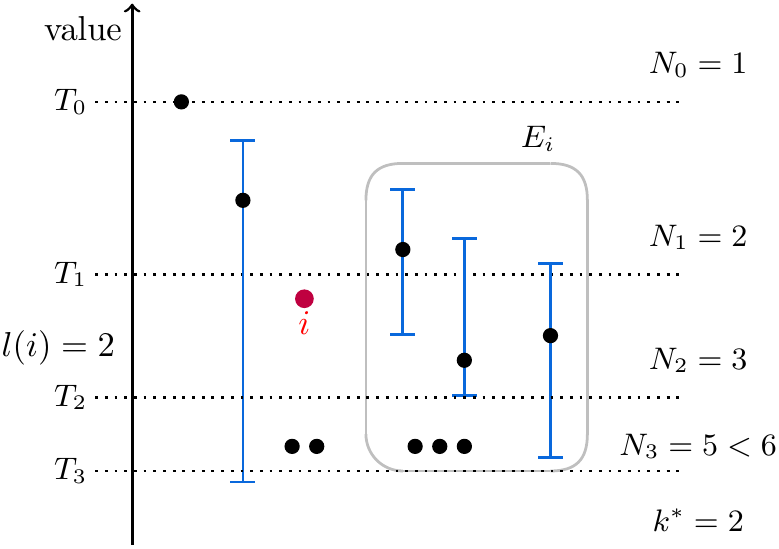}
		\caption{Thresholds, set $E_i$, $N_k$, and $k^*$.}
		\label{fig:ei}
	\end{subfigure}
	\caption{Illustrations of the definitions used in Section~\ref{sec:sos} with respect to agent $i$ (pink). Each point in the graphs represents a bidder and the $y$-axis represents the bidder's valuation. Each blue segment shows the range of a bidder $j$'s possible valuation for different values of (an unknown) $s_i$, with the highest point equal to $v_j(1_i, \s_{-i})$ and the lowest point equal to $v_j(0_i, \s_{-i})$. Figure (a) shows a possible $D_i$ (gray box, defined in~\eqref{eq:di}) and a possible edge $j\to i$ in graph $G_{\v,\s}$ (Definition~\ref{def:G_vs}) for a ``missed" $j \not \in D_i$. Figure (b) illustrates the thresholds, the set $E_i$, and the quantities $N_k$ and $k^*$ defined in Section~\ref{sec:approx}. }
\end{figure}

\subsection{Feasibility} \label{sec:feasible}
In this section, we prove that our allocation rule is feasible, i.e., $\sum_{i=1}^n x_i \le 1$. 
We first note that if $D_i$ contains all the bidders with values at least as high as $i$'s value (when everyone reports truthfully), then feasibility is immediate. The argument for this is: if we rename the agents according to the order of values ($1$ being the highest-valued agent), then the renamed $D_i$ contains all agents $j < i$, hence $|D_i| \geq i-1$, implying $x_i\leq \frac{1}{\ln n +1 }\frac{1}{i}$, and thus, 
\begin{align}
\sum_i x_i  \leq \frac{1}{\ln n +1 }\sum_i\frac{1}{i}\leq 1. \label{eqn:e_bound}	
\end{align}
However, $D_i$ does \emph{not} necessarily contain all agents $j$ with $v_j(\s)\geq v_i(\s)$, as it may be that  $v_j(0_i,\s_{-i}) < \left(1-1/\log n\right)v_i(\s)$, and thus $j \not \in D_i$.
In order to handle such ``missed" agents, we introduce the following graph.

\begin{definition} \label{def:G_vs}
	Let $G_{\v,\s}=([n],A)$ be the following directed graph:
	\begin{itemize}
		\item The vertices are the set of $n$ agents.
		\item For two vertices $j\neq i$ representing two agents, there is a directed edge $j\rightarrow i$ if $v_j(\s)\geq v_i(\s)$ and $v_j(0_i,\s_{-i}) < \left(1-1/\log n\right)v_i(\s)$; that is, if $j$ has a higher value than $i$, but is not counted in the set $D_i$ that determines $i$'s allocation probability.
	\end{itemize} %, where , and there is a directed edge $\langle j, i\rangle$ between agent $j$ and agent $i$ if $j\rightarrow i$.
\end{definition}
Figure~\ref{fig:di} illustrates a case where there is a directed edge between agent $j$ and agent $i$ in $G_{\v,\s}$.

\begin{comment}
%To prove feasibility, consider the following binary relation that will help us in determining $\chi$.
\begin{definition}[$\rightarrow$ relation]
	For two agents $i\neq j$, we say that $j\rightarrow i$ if $v_j(\s)\geq v_i(\s)$, and $v_j(0_i,\s_{-i}) < \left(1-1/\log n\right)v_i(\s)$.
\end{definition}
$j\rightarrow i$ represents agents who have a higher value than $i$, but are not ``counted" in the set $D_i$ determining $i$'s probability. \shuran{I find it easier to follow if we move this part to the beginning. If that's okay, the rest of the paragraph needs to be removed.} We first note that if there are no two agents $i,j$ for which $j\rightarrow i$, then feasibility is immediate. The argument for this is the following: if we rename agents ranked according to their value $v_i(\s)$ ($1$ being the highest values agent), then agent with the $i$th highest value has all agents with $\ell$th highest value for $\ell < i$ in $D_i$, and therefore $|D_i| \geq i-1$, implying $x_i\leq \frac{1}{\ln n +1 }\frac{1}{i}$. Hence, $\sum_i x_i  \leq \frac{1}{\ln n +1 }\sum\frac{1}{i}\leq 1.$ 
%This is since the agent with the $i$th highest value has all agents with $\ell$th highest value for $\ell\leq i$ in $D_i$, and therefore $|D_i|\geq i-1$,  $\alpha_i\leq \frac{1}{\ln n +1 }\frac{1}{i}$, and $\sum_i \alpha_i \leq \frac{1}{\ln n +1 }\sum\frac{1}{i}\leq 1.$
However, this is not the case. In order to handle such ``missed" agents, we introduce the following construction. 

Consider the directed graph $G_{\v,\s}=(V=[n],A)$, where the vertices are the set of agents, and there is a directed edge $\langle j, i\rangle$ between agent $j$ and agent $i$ if $j\rightarrow i$. 
\end{comment}
The main idea of our proof is that if we divide the vertices into independent sets, then the allocation probability within each independent set can be bounded as in~\eqref{eqn:e_bound} because there's no ``missed'' agents within this independent set. Therefore, we can relate the value $\chi$ in Equation~\eqref{eq:allocation} to the coloring number of $G_{\v,\s}$.

\begin{lemma}
	Setting $\chi$ to be greater than or equal to $\chi(G_{\v,\s})$ (the coloring number of $G_{\v,\s}$) in Equation~\eqref{eq:allocation} ensures that the mechanism is feasible. \label{eq:chi-fesaible}
\end{lemma}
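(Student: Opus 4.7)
The plan is to exploit a proper coloring of $G_{\v,\s}$ to partition the bidders into $\chi(G_{\v,\s})$ independent sets, bound the total allocation probability within each independent set by the same harmonic-sum argument sketched around~\eqref{eqn:e_bound}, and then sum over the color classes.

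The key step is a structural observation: for any independent set $I$ of $G_{\v,\s}$ and any two distinct agents $i,j \in I$ with $v_j(\s) \geq v_i(\s)$, one has $j \in D_i$. To see this, monotonicity of $v_j$ in the $i$-th coordinate gives $v_j(1_i,\s_{-i}) \geq v_j(\s) \geq v_i(\s)$, so the first defining inequality of $D_i$ holds automatically. Because $I$ is independent, there is no edge $j \to i$ in $G_{\v,\s}$; combined with $v_j(\s) \geq v_i(\s)$, Definition~\ref{def:G_vs} then forces the second inequality $v_j(0_i,\s_{-i}) \geq (1 - 1/\log n)\, v_i(\s)$ to hold as well.

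Once this observation is in place, I would order the agents of $I$ as $i_1,\ldots,i_{|I|}$ by non-increasing value, breaking ties arbitrarily. The observation yields $\{i_1,\ldots,i_{r-1}\} \subseteq D_{i_r}$ for every $r$, so $|D_{i_r}|+1 \geq r$ and
$$x_{i_r} \leq \frac{1}{\chi}\cdot\frac{1}{\ln n + 1}\cdot\frac{1}{r}.$$
Summing this harmonic tail and using $H_{|I|} \leq 1 + \ln n$ gives
$$\sum_{i \in I} x_i \leq \frac{1}{\chi(\ln n+1)}\sum_{r=1}^{|I|}\frac{1}{r} \leq \frac{1}{\chi}.$$
Adding these bounds over the at most $\chi(G_{\v,\s}) \leq \chi$ color classes of the chosen coloring yields $\sum_{i=1}^n x_i \leq \chi(G_{\v,\s}) \cdot (1/\chi) \leq 1$, which is feasibility.

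The main thing to get right is the structural observation: the graph $G_{\v,\s}$ has been defined precisely to record those ``higher-value but outside $D_i$'' pairs that could spoil the harmonic-sum bound, so one must carefully combine independence of $I$ with monotonicity of valuations in the own signal to secure both defining inequalities of $D_i$. The rest is bookkeeping; notably, submodularity plays no role at this step and will only enter later when bounding $\chi(G_{\v,\s})$ itself.
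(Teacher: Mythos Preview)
Your proposal is correct and follows essentially the same approach as the paper's own proof: partition the bidders by color class, argue that within each independent set every higher-valued agent lies in $D_i$ (using monotonicity for the first defining inequality of $D_i$ and the absence of the edge $j\to i$ for the second), reduce to the harmonic sum as in~\eqref{eqn:e_bound}, and sum over the at most $\chi$ color classes. Your closing remark that submodularity plays no role here is also exactly right and matches the paper's organization.
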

\begin{proof}
	Let $\chi = \chi(G_{\v,\s})$, consider a coloring of $G_{\v,\s}$ with $\chi$ colors, and for every color $c\in [\chi]$, let $\mathcal{I}_c = \{i \ :\ \mathrm{color}(i)=c \}$ be the set of agents with color $c$. Consider a color $c$, and rename agents such that agent $i_c$ is the agent with the $i$th highest value $v_{i_c}(\s)$ among agents in $\mathcal{I}_c$.  For every $\ell_c \in \mathcal{I}_c$ such that $\ell<i$, %\shuran{The notation is a little bit confusing here. I think it is okay to drop $c$ and just say that $i$ is the agent with $i$-th highest value in $N_c$, and maybe use $j$ when defining $N_c$ and only consider agents in $N_c$ afterward. But I can understand the current version after reading more carefully.} 
	$v_{\ell_c}(1_{i_c},\s_{-i_c})\geq v_{\ell_c}(\s)\geq v_{i_c}(\s)$. Since $\mathrm{color}(\ell_c)=\mathrm{color}(i_c)$, $\ell_c\not\rightarrow i_c$. Therefore, it must be the case that $v_{\ell_c}(0_{i_c},\s_{-i_c})\geq (1-1/\ln n) v_{i_c}(\s),$ and $\ell_c\in D_{i_c}\Rightarrow \ell_c \in  D_{i_c}\cap \mathcal{I}_c$. Hence, 
	\begin{eqnarray}
		x_{i_c} & = & \frac{1}{\chi}\frac{1}{1+\ln n}\frac{1}{|D_{i_c}|+1}\nonumber\\
		& \leq& \frac{1}{\chi}\frac{1}{1+\ln n}\frac{1}{|D_{i_c}\cap \mathcal{I}_c|+1}\nonumber\\
		& \leq&  \frac{1}{\chi}\frac{1}{1+\ln n}\frac{1}{|\{\ell_c\in \mathcal{I}_c \ : \ \ell < i\}|+1} \nonumber\\
		& = &  \frac{1}{\chi}\frac{1}{1+\ln n}\frac{1}{i}.\label{eq:alpha_bound}
	\end{eqnarray} 
	
	We have that 
	\begin{eqnarray*}
		\sum_{i=1}^n x_i  & = & \sum_{c\in [\chi]}\sum_{i=1}^{|\mathcal{I}_c|} x_{i_c} \\
		& \leq & \sum_{c\in [\chi]}\sum_{i=1}^{|\mathcal{I}_c|}\frac{1}{\chi}\frac{1}{1+\ln n}\frac{1}{i}\\
		& = & \frac{1}{\chi}\sum_{c\in[\chi]}\frac{1}{1+\ln n}\sum_{i=1}^{|\mathcal{I}_c|}\frac{1}{i}\\
		& \leq &\frac{1}{\chi}\sum_{c\in[\chi]}\frac{1+\ln|\mathcal{I}_c|}{1+\ln n} \ \leq \ 1,
	\end{eqnarray*}
	where the first inequality follows Equation~\eqref{eq:alpha_bound}. This concludes the proof.
\end{proof}

To bound $\chi(G_{\v,\s})$, we use the following graph-theoretic fact tying the maximum out-degree in a directed graph to its coloring number. We defer its proof to Appendix~\ref{sec:outdegree-chi-proof}.
\begin{lemma} [Folklore]
	A directed graph $G$ where the out-degree of each node is bounded by $k$  has a coloring number  $\chi(G)\leq 2k+1$. \label{lem:outdegree-chi}
\end{lemma}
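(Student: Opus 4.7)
The plan is to show that the underlying undirected graph of $G$ is $2k$-degenerate, and then to use the standard degeneracy-implies-coloring argument to obtain a proper coloring with $2k+1$ colors. Throughout, a ``coloring'' of the directed graph means a coloring of its underlying undirected graph, so that endpoints of any directed edge (in either direction) receive distinct colors, which is exactly what the proof of Lemma~\ref{eq:chi-fesaible} uses.

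First I would establish the degeneracy claim: every induced subgraph $G'$ of $G$ on vertex set $V'$ contains a vertex whose undirected degree in $G'$ is at most $2k$. The key observation is that the out-degree bound is inherited by every induced subgraph, since deleting vertices can only remove outgoing edges. Summing out-degrees over $V'$ equals the number of directed edges in $G'$, which is at most $k|V'|$; this number is also the number of undirected edges in the underlying graph of $G'$. Consequently the sum of undirected degrees is at most $2k|V'|$, so by an averaging argument some vertex of $G'$ has undirected degree at most $2k$.

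Next I would use this to produce a vertex ordering $v_1,v_2,\ldots,v_n$ suitable for greedy coloring. Starting from $G$, repeatedly pick a vertex $v_n$ of minimum undirected degree in the current graph (which is at most $2k$ by the previous step), remove it, and recurse on the remaining graph to choose $v_{n-1}$, and so on. By construction, each $v_i$ has at most $2k$ neighbors (in the underlying undirected sense) among $\{v_{i+1},\ldots,v_n\}$.

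Finally, I would color the vertices in reverse order $v_n, v_{n-1}, \ldots, v_1$, assigning to $v_i$ any color from $\{1,\ldots,2k+1\}$ not already used by its neighbors among $\{v_{i+1},\ldots,v_n\}$. Such a color always exists because $v_i$ has at most $2k$ previously colored neighbors. This yields a proper coloring of the underlying undirected graph of $G$ using at most $2k+1$ colors, so $\chi(G)\le 2k+1$. There is no real obstacle here; the only thing to be careful about is to interpret ``coloring'' in the directed-graph sense consistent with how it is applied in Lemma~\ref{eq:chi-fesaible}, namely that same-colored vertices are non-adjacent in either direction, which is exactly what proper coloring of the underlying undirected graph guarantees.
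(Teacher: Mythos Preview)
Your proposal is correct and is essentially the same argument as the paper's: both use the edge-counting observation that a directed graph with out-degrees bounded by $k$ has at most $k|V|$ edges, so some vertex has undirected degree at most $2k$, and then apply this recursively (the paper phrases it as induction on $|V|$, you phrase it as $2k$-degeneracy followed by greedy coloring). The only cosmetic difference is that the paper removes one low-degree vertex, colors the rest by the induction hypothesis, and then colors the removed vertex, whereas you first extract the full elimination ordering and then color in reverse---these are the same proof unrolled differently.
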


Finally, the following lemma implies that setting $\chi=2\log n + 1$ in Equation~\eqref{eq:allocation} ensures feasibility.
\begin{lemma}
	$\chi(G_{\v,\s})\leq 2\log n + 1$. \label{lem:chi-bound}
\end{lemma}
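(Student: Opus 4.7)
The plan is to combine Lemma~\ref{lem:outdegree-chi} with an upper bound on the maximum out-degree of $G_{\v,\s}$. Concretely, I will argue that every vertex $j$ has fewer than $\log n$ out-neighbors, so that $\chi(G_{\v,\s}) \le 2\log n + 1$.

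Fix a vertex $j$ and enumerate its out-neighbors as $i_1,\ldots,i_m$. For each edge $j\to i_k$, Definition~\ref{def:G_vs} gives both $v_j(\s)\ge v_{i_k}(\s)$ and $v_j(0_{i_k},\s_{-i_k}) < (1-1/\log n)\,v_{i_k}(\s)$. Combining these two inequalities (the first lets us replace $v_{i_k}(\s)$ by a quantity at most $v_j(\s)$) yields the per-edge bound
\[
v_j(\s) - v_j(0_{i_k},\s_{-i_k}) \;>\; v_j(\s) - (1-1/\log n)\,v_{i_k}(\s) \;\ge\; v_j(\s)/\log n,
\]
so each out-neighbor $i_k$ witnesses a large marginal drop of $v_j$ when $s_{i_k}$ is set to $0$.

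Next, I will use submodularity over signals (Definition~\ref{def:sos}) to add up these marginal drops. Zero out the coordinates $s_{i_1},\ldots,s_{i_m}$ one at a time and telescope:
\[
v_j(\s) - v_j\bigl(0_{i_1,\ldots,i_m},\s_{\mathrm{rest}}\bigr)
= \sum_{k=1}^{m} \Bigl[v_j\bigl(0_{i_1,\ldots,i_{k-1}},s_{i_k},\s_{\mathrm{rest}}\bigr) - v_j\bigl(0_{i_1,\ldots,i_k},\s_{\mathrm{rest}}\bigr)\Bigr].
\]
In the $k$-th summand, all other coordinates are coordinate-wise smaller than in $\s_{-i_k}$, so SOS implies the summand is at least the ``isolated'' marginal $v_j(\s)-v_j(0_{i_k},\s_{-i_k})$. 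Since $v_j\ge 0$, the left-hand side is at most $v_j(\s)$, giving
\[
\sum_{k=1}^{m}\bigl[v_j(\s)-v_j(0_{i_k},\s_{-i_k})\bigr] \;\le\; v_j(\s).
\]
Plugging in the per-edge lower bound from the previous paragraph yields $m\cdot v_j(\s)/\log n < v_j(\s)$, hence $m < \log n$. (The degenerate case $v_j(\s)=0$ is trivial since then the strict inequality defining an edge forces $m=0$.)

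Therefore every vertex of $G_{\v,\s}$ has out-degree at most $\log n$, and Lemma~\ref{lem:outdegree-chi} gives $\chi(G_{\v,\s})\le 2\log n + 1$. The only nontrivial step is the telescoping SOS argument; everything else is a direct calculation from the edge condition and the folklore graph-coloring bound.
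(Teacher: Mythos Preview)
Your proposal is correct and follows essentially the same approach as the paper: bound the out-degree of each vertex by $\log n$ via the per-edge marginal drop $v_j(\s)-v_j(0_{i_k},\s_{-i_k})>v_j(\s)/\log n$, then use the SOS telescoping inequality to cap the sum of these marginals by $v_j(\s)$, and finally invoke Lemma~\ref{lem:outdegree-chi}. The only cosmetic differences are that the paper phrases the degree bound as a proof by contradiction (assuming $|P_j|>\log n$) whereas you argue directly, and you explicitly dispose of the degenerate case $v_j(\s)=0$.
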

\begin{proof}
	We show that for every vertex $j$ in $G_{\v,\s}$, $\deg^+(j)\leq \log n$. Using Lemma~\ref{lem:outdegree-chi}, this concludes the proof. Consider agent $j$, and set $P_j = \{i  \ : \ j\rightarrow i\}.$  By the definition of $G_{\v,\s}$, showing $\deg^+(j)\leq  \log n$ is equivalent to showing $|P_j|\leq \log n$. Assume toward a contradiction that there exists an agent $j$ for which $|P_j|> \log n$. Let $i\in P_j$. Since $j\rightarrow i$, $v_j(\s) \geq v_i(\s)$, and $v_j(0_i,\s_{-i}) < (1-1/\log n)v_i(\s)$. Therefore, 
	\begin{eqnarray*}
		v_j(\s)-v_j(0_i,\s_{-i}) \ >\ v_j(\s)- (1-1/\log n)v_i(\s) \ \geq\ v_j(\s)- (1-1/\log n)v_j(\s) \ =\ v_j(\s)/\log n.\label{eq:diff_lb}  
	\end{eqnarray*}
	We have that 
	\begin{eqnarray}
		\sum_{i\in P_j} v_j(\s)-v_j(0_i,\s_{-i}) \ >\ |P_j| v_j(\s)/\log n \ >\ v_j(\s), \label{eq:geq_vj}
	\end{eqnarray}
	where the second inequality uses the assumption that $|P_j| > \log n.$ Rename agents in $P_j$ to be agents $1,\ldots, |P_j|$ (and agent $j$ to be larger than $|P_j|$). Notice that by submodularity of $v_j$, for every $i\in[P_j]$,  $$v_j(\s)-v_j(0_i,\s_{-i})\leq v_j(0_{[i-1]},\s_{-[i-1]})-v_j(0_{[i]},\s_{-[i]}),$$
	and therefore
	\begin{eqnarray*}
		\sum_{i\in P_j} v_j(\s)-v_j(0_i,\s_{-i}) & = & \sum_{i = 1}^{|P_j|} v_j(\s)-v_j(0_i,\s_{-i}) \\
		& \leq & \sum_{i = 1}^{|P_j|}v_j(0_{[i-1]},\s_{-[i-1]})-v_j(0_{[i]},\s_{-[i]}) \\
		& = &  v_j(\s) - v_j(0_{P_j},\s_{-P_j}) \ \leq \ v_j(\s),
	\end{eqnarray*}
	contradicting Equation~\eqref{eq:geq_vj}. This concludes the proof.
\end{proof}

By Lemmas~\ref{lem:main-epicir}, \ref{eq:chi-fesaible} and \ref{lem:chi-bound}, we can define 
\begin{eqnarray}
	x_i(\v,\s) = \frac{1}{2\log n + 1} \cdot 
	\frac{1}{\ln n + 1} \cdot \frac{1}{|D_i(\v,\s)| + 1}.\label{eq:feasible-alpha}
\end{eqnarray}
\subsection{Approximation} \label{sec:approx}

We now show that our allocation rule gives an $O(\log^2n)$-approximation. 
%A little more formally, 
To reason about the allocation probability of the more important agents, we divide the value space into levels.
Let $\opt = {\max}_{i\in [n]}v_i(\s)$. Consider the following infinite sequence of thresholds, defined only for the sake of analysis:  $$T_0 = \opt, T_1 = \opt(1-1/\log n), \ldots,T_\ell =\opt(1-1/\log n)^\ell, \ldots$$
%, starting with the highest level $T_0 = v_{i^*}(\s)$, and with every consecutive level decreased by a multiplicative factor of $(1-1/\log n)$.  
Consider some level $\ell$ at value $T_\ell = (1-1/\log n)^\ell \cdot v_{i^*}(\s)$. For any agent $i$ with value $v_i(\s) \in [T_\ell, T_{\ell-1})$, no agent $j$ with value $v_j(\s) < (1-1/\log n)\cdot T_\ell = T_{\ell + 1}$ can be in the set $D_i$ by definition. Therefore, when reasoning about the allocation of agents with value larger than $T_\ell$, we only need to care about agents with value larger than $T_{\ell+1}$. 
We proceed by showing that one of two things might happen. 
\begin{enumerate}
	\item Either the number of agents with value between $T_{\ell+1}$ and $T_\ell$  is larger than the number of agents with value larger than $T_\ell$; or
	\item We can show that the agents with value larger than $T_\ell$ get a large fraction of the probability, and thus we can relate the achieved welfare to the value $T_\ell$.
\end{enumerate}
This is formalized by Lemma~\ref{lem:beta-bound} below. The first case implies that the number of agents doubled between $T_\ell$ and $T_{\ell+1}$. Therefore, if we start at level $T_0$ and continue down levels until we reach the second case, we are guaranteed to stop at some level $\ell \leq \log n$, as this case ensures a large fraction of probability. This allows us claim that we get a good fraction of allocation probability for agents with value larger than $T_\ell \geq (1-1/\log n)^{\log n} v_{i^*}(\s) \approx \opt/e.$

For agent $i$, let $$\ell(i) \ =\ \min\{\ell \ : \ v_i(\s)\geq T_\ell\}$$ be the index of the largest threshold that $i$ passes. We define $E_i(\v,\s)$ to be the set of agent $j$ who has the highest possible valuation $v_j(1_i,\s_{-i})$ no less than $T_{\ell(i)}$ and the lowest possible valuation $v_j(0_i, \s_{-i})$ no less than $T_{\ell(i)+1}$, assuming that $s_i$ is unknown,%We define the following set:
$$E_i(\v,\s) \ =\  \left\{j \ : \ v_j(1_i,\s_{-i}) \geq T_{\ell(i)} \ \wedge \ v_j(0_i, \s_{-i})\geq T_{\ell(i)+1}\right\}.$$
The sequence of thresholds and the set $E_i$ are illustrated in Figure~\ref{fig:ei}. We define $y_i(\v,\s)$ to be the same as $x_i(\v,\s)$, but with $E_i$ used in place of $D_i$,
%and the following quantity:
\begin{eqnarray}
	y_i(\v,\s) \ =\  \frac{1}{2\log n + 1} \cdot
	\frac{1}{\ln n + 1} \cdot \frac{1}{|E_i(\v,\s)| + 1}.\label{eq:allocationlb}
\end{eqnarray}

The $y_i$'s will be used to provide a lower bound for the probabilities of allocation, which will be easier to reason about than using $x_i$'s directly.

\begin{lemma}
	For every $\v,\s,$ and $i$, $y_i(\v,\s)\leq x_i(\v,\s).$ \label{lem:alpha-ge-beta}
\end{lemma}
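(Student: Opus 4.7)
The plan is to reduce the inequality $y_i(\v,\s)\leq x_i(\v,\s)$ to a purely set-theoretic containment. Comparing the formulas in \eqref{eq:feasible-alpha} and \eqref{eq:allocationlb}, the two allocations share a common prefactor $\frac{1}{2\log n+1}\cdot\frac{1}{\ln n+1}$ and differ only in whether the denominator uses $|D_i|+1$ or $|E_i|+1$. Since both expressions are decreasing in the cardinality, it suffices to prove
\begin{equation*}
D_i(\v,\s)\ \subseteq\ E_i(\v,\s),
\end{equation*}
which immediately implies $|D_i|\leq |E_i|$ and hence $y_i\leq x_i$.

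To establish the containment, I would fix an agent $j\in D_i$ and verify the two defining inequalities of $E_i$ in turn. The first step is to bound $v_i(\s)$ from below by the threshold $T_{\ell(i)}$: by the definition $\ell(i)=\min\{\ell : v_i(\s)\geq T_\ell\}$, we have $v_i(\s)\geq T_{\ell(i)}$. Combined with the membership condition $v_j(1_i,\s_{-i})\geq v_i(\s)$ from the definition of $D_i$, this yields $v_j(1_i,\s_{-i})\geq T_{\ell(i)}$, the first requirement for $j\in E_i$.

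For the second requirement, I would use the geometric spacing of the thresholds: $T_{\ell(i)+1}=(1-1/\log n)\,T_{\ell(i)}\leq (1-1/\log n)\,v_i(\s)$. Since $j\in D_i$ also satisfies $v_j(0_i,\s_{-i})\geq (1-1/\log n)\,v_i(\s)$, chaining these two inequalities gives $v_j(0_i,\s_{-i})\geq T_{\ell(i)+1}$, which is exactly the remaining condition for $j\in E_i$. This completes the containment and hence the lemma.

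The proof is essentially a direct unpacking of the definitions, so there is no real obstacle; the only thing to be careful about is using the correct index convention for $\ell(i)$ (the \emph{smallest} $\ell$ with $v_i(\s)\geq T_\ell$, so that $T_{\ell(i)}\leq v_i(\s)$ rather than the other way around). Once that is pinned down, the ratio $(1-1/\log n)$ appearing in the definition of $D_i$ aligns exactly with the geometric ratio between consecutive thresholds $T_{\ell(i)}$ and $T_{\ell(i)+1}$, making the second inequality fit seamlessly.
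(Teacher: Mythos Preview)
Your proposal is correct and follows essentially the same approach as the paper's proof: both reduce the inequality to the set containment $D_i(\v,\s)\subseteq E_i(\v,\s)$ and verify it by combining $v_i(\s)\geq T_{\ell(i)}$ with the two defining inequalities of $D_i$. The reasoning and even the chain of inequalities match the paper's argument almost line for line.
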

\begin{proof}
	Notice that proving the lemma is equivalent to proving $|D_i(\v,\s)|\leq |E_i(\v,\s)|$. We show that $D_i(\v,\s)\subseteq E_i(\v,\s)$. Let $j$ be some agent in $D_i(\v,\s)$. This implies that $v_j(1_i,\s_{-i})\geq v_i(\s)\geq T_{\ell(i)},$ where the second inequality follows from the definition of $\ell(i)$. This also implies that $v_j(0_i,\s_{-i})\geq (1-1/\log n)v_i(\s)\geq (1-1/\log n)T_{\ell(i)}=T_{\ell(i)+1}.$ Therefore, $j\in E_i(\v,\s)$, which completes the proof.
\end{proof}

Let $N_k$ be the number of bidders with value in $[T_k,T_{k-1})$:
$$
N_k = |\{ j \ :\  \ell(j)=k\}|.
$$
The following lemma gives a bound on the quantity $y$ that each agent gets, which we use to relate to the welfare that agents achieve. The lemma looks at the number of bidders above each threshold $T_k$, $\sum_{\ell=0}^k N_\ell$. If at a certain threshold $T_{k^*}$, the number of bidders in the next layer $N_{k^*+1}$ is smaller than the number of bidders above this threshold, then the probability of an agent above $T_{k^*}$ getting allocated can be lower-bounded, as shown in the following lemma. An illustration of the $N_k$'s and $k^*$ is given in Figure~\ref{fig:ei}. 
\begin{lemma}
	Let $k^*$ be the minimal index such that 
	$$
	N_{k^*+1} \ < \ \sum_{\ell=0}^{k^*} N_{\ell}, 
	$$
	then $$\sum_{i=1}^n y_i(\v,\s)\cdot  v_i(\s)\ \geq\ \frac{(1-1/\log n)^{k^*}}{2(2\log n+1)(\ln n+1)}\opt.$$  \label{lem:beta-bound}
\end{lemma}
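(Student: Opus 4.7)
The plan is to exploit the defining inequality of $k^*$ to upper bound $|E_i(\v,\s)|$ for every agent $i$ with $\ell(i) \leq k^*$, and then sum the contributions from these agents to obtain the desired lower bound on $\sum_i y_i v_i(\s)$.

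First, I would establish the following structural fact: if $j \in E_i(\v,\s)$, then $\ell(j) \leq \ell(i)+1$. Indeed, by the definition of $E_i$, the membership $j \in E_i$ forces $v_j(0_i,\s_{-i}) \geq T_{\ell(i)+1}$, and monotonicity of $v_j$ in the $i$-th coordinate then yields $v_j(\s) \geq v_j(0_i,\s_{-i}) \geq T_{\ell(i)+1}$. By the definition of $\ell(j)$ this gives $\ell(j) \leq \ell(i)+1$. Consequently, every member of $E_i$ lies in one of the layers $0,1,\dots,\ell(i)+1$, so $|E_i| \leq \sum_{\ell=0}^{\ell(i)+1} N_\ell$.

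Next, let $M = \sum_{\ell=0}^{k^*} N_\ell$ denote the total number of agents with $\ell(i) \leq k^*$. For any such agent $i$, combining the structural bound above with the defining inequality $N_{k^*+1} < M$ gives $|E_i| \leq \sum_{\ell=0}^{k^*+1} N_\ell = M + N_{k^*+1} < 2M$. Since $|E_i|$ and $2M$ are both integers, this strict inequality yields $|E_i|+1 \leq 2M$, and therefore
\[
y_i(\v,\s) \;\geq\; \frac{1}{2M\,(2\log n + 1)(\ln n + 1)}.
\]

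Finally, I would finish by summing over the $M$ agents with $\ell(i) \leq k^*$, each of which satisfies $v_i(\s) \geq T_{\ell(i)} \geq T_{k^*}$:
\[
\sum_{i=1}^n y_i(\v,\s)\cdot v_i(\s) \;\geq\; \sum_{i:\,\ell(i)\leq k^*} y_i(\v,\s)\cdot v_i(\s) \;\geq\; \frac{M\cdot T_{k^*}}{2M\,(2\log n + 1)(\ln n + 1)} \;=\; \frac{(1-1/\log n)^{k^*}\,\opt}{2(2\log n + 1)(\ln n + 1)},
\]
which is precisely the claimed bound. The only real insight required is the structural observation in the first step (which is where the construction of $E_i$ with two thresholds $T_{\ell(i)}$ and $T_{\ell(i)+1}$ pays off); the rest is an almost mechanical consequence of the minimality of $k^*$ and the uniform lower bound on the values of agents in layers $0$ through $k^*$.
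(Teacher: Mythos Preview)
Your proof is correct and follows essentially the same approach as the paper: both arguments use monotonicity to show that $E_i$ is contained in the set of agents with value at least $T_{\ell(i)+1}$, specialize to agents $i$ with $\ell(i)\le k^*$ so that $|E_i|+1\le 2\sum_{\ell=0}^{k^*}N_\ell$ via the defining inequality of $k^*$, and then sum the resulting lower bound on $y_i$ against $v_i(\s)\ge T_{k^*}$. The only cosmetic difference is that you first isolate the structural fact $j\in E_i\Rightarrow \ell(j)\le \ell(i)+1$ as a separate step, whereas the paper writes the same containments inline as a chain of inequalities on $|E_i|+1$.
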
 
\begin{proof}
	Let $k^*$ be the minimal index such that $N_{k^*+1} < \sum_{\ell=0}^{k^*} N_{\ell}$, then for any bidder $i$  with value $v_i(\s) \geq T_{k^*}$,
	\begin{eqnarray*}
		y_i(\v,\s) & = & \frac{1}{2\log n+1}\frac{1}{\ln n+1}  \frac{1}{|E_i(\v,\s)|+1}\\
		& = & \frac{1}{2\log n+1}\frac{1}{\ln n+1}  \frac{1}{|\left\{j \ : \ v_j(1_i,\s_{-i}) \geq T_{\ell(i)} \wedge v_j(0_i, \s_{-i})\geq T_{\ell(i)+1}\right\}|+1}\\
		& \geq & \frac{1}{2\log n+1}\frac{1}{\ln n+1}  \frac{1}{|\left\{j \ : \ v_j(1_i,\s_{-i}) \geq T_{k^*} \wedge v_j(0_i, \s_{-i})\geq T_{k^*+1}\right\}|+1} \\
		& \geq & \frac{1}{2\log n+1}\frac{1}{\ln n+1}  \frac{1}{|\left\{j \ : \ v_j(0_i, \s_{-i})\geq T_{k^*+1}\right\}|+1}\\
		& \geq & \frac{1}{2\log n+1}\frac{1}{\ln n+1}  \frac{1}{|\left\{j \ : \ v_j(\s)\geq T_{k^*+1}\right\}|+1} \\ 
		& = & \frac{1}{2\log n+1}\frac{1}{\ln n+1} \cdot \frac{1}{\sum_{\ell=0}^{k^*+1} N_{\ell}+1}\\
		& \geq & \frac{1}{2\log n+1}\frac{1}{\ln n+1}\cdot  \frac{1}{2 \sum_{\ell=0}^{k^*} N_{\ell}}.
	\end{eqnarray*}
	In the above, the first three inequalities follow since each set in the denominator contains the previous set. The last inequality follows the definition of $k^*$. 
	
	By definition of $T_{k^*}$, we have 
	\begin{eqnarray*}
		T_{k^*} = (1-1/\log n)^{k^*} \opt.
	\end{eqnarray*}
	We get that 
	\begin{eqnarray*}
		\sum_{i=1}^n y_i(\v,\s)\cdot  v_i(\s) & \geq & \sum_{i\ :\ v_i(\s)\geq T_{k^*}} y_i(\v,\s)\cdot v_i(\s) \\
		& \geq & \sum_{i\ :\ v_i(\s)\geq T_{k^*}}\frac{(1-1/\log n)^{k^*} \opt}{(2\log n+1)(\ln n+1)(2 \sum_{\ell=0}^{k^*} N_{\ell})}\\
		& = & \left(\sum_{\ell=0}^{k^*} N_{\ell}\right)\cdot \frac{(1-1/\log n)^{k^*} \opt}{(2\log n+1)(\ln n+1)(2 \sum_{\ell=0}^{k^*} N_{\ell})}\\
		& = & \frac{(1-1/\log n)^{k^*}}{2(2\log n+1)(\ln n+1)}\opt.
	\end{eqnarray*}
\end{proof}

We bound the value of $k^*$ from the previous Lemma.
\begin{lemma}
	Let $k^*$ be the minimal index such that $N_{k^*+1} <  \sum_{\ell=0}^{k^*} N_{\ell},$ then $k^*\leq \log n$. \label{lem:k-bound}
\end{lemma}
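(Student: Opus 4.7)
The plan is to exploit the minimality of $k^*$ to show that the cumulative number of bidders above each threshold doubles at every step up to $k^*$, which forces $k^* \leq \log n$ since there are only $n$ agents in total.

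First, I would unpack what minimality of $k^*$ gives us: for every $k < k^*$, the defining inequality fails, so
\[
N_{k+1} \ \geq\ \sum_{\ell=0}^{k} N_\ell.
\]
Adding $\sum_{\ell=0}^{k} N_\ell$ to both sides of this inequality yields the doubling recurrence
\[
\sum_{\ell=0}^{k+1} N_\ell \ \geq\ 2\sum_{\ell=0}^{k} N_\ell \qquad \text{for all } k < k^*.
\]

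Next I would establish the base case $\sum_{\ell=0}^{0} N_\ell = N_0 \geq 1$. This holds because the agent $i^*$ attaining $v_{i^*}(\s)=\opt$ satisfies $v_{i^*}(\s) \geq T_0 = \opt$, so $\ell(i^*) = 0$ and hence $N_0 \geq 1$. Iterating the doubling recurrence from $k=0$ up to $k = k^*$ then gives
\[
\sum_{\ell=0}^{k^*} N_\ell \ \geq\ 2^{k^*}\cdot N_0\ \geq\ 2^{k^*}.
\]

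Finally, the left-hand side is at most $n$, since it counts (disjoint) sets of bidders and there are only $n$ bidders overall. Thus $2^{k^*} \leq n$, which gives $k^* \leq \log n$ as required. No step here looks like a genuine obstacle; the only thing to be careful about is the ``off-by-one'' when invoking minimality (using $k < k^*$ rather than $k \leq k^*$) and verifying the base case $N_0 \geq 1$, both of which are straightforward.
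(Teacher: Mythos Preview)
Your argument is correct and follows the same doubling idea as the paper's proof; you have simply spelled out the details (the base case $N_0\geq 1$ and the explicit recurrence) that the paper leaves implicit.
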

\begin{proof}
	For each $k$ such that $N_{k+1}  \geq \sum_{\ell=0}^{k} N_{\ell},$ the number of agents with value above $T_{k+1}$ is at least twice the number of agents with value above $T_{k}$. Since the number of agents can double at most $\log n$ times, the lemma follows.
\end{proof}

We now prove our main theorem.
\begin{theorem}
	The mechanism is $O(\log^2 n)$ approximation to the optimal welfare.
\end{theorem}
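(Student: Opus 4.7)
The plan is to chain the three lemmas immediately preceding the theorem statement. The expected welfare of the mechanism, under truthful reporting (which is guaranteed by Lemma~\ref{lem:main-epicir}), is $\sum_{i=1}^n x_i(\v,\s)\, v_i(\s)$. First I would invoke Lemma~\ref{lem:alpha-ge-beta} to replace $x_i$ by the smaller quantity $y_i$, obtaining the lower bound
\[
\sum_{i=1}^n x_i(\v,\s)\, v_i(\s) \ \geq \ \sum_{i=1}^n y_i(\v,\s)\, v_i(\s).
\]

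Next I would apply Lemma~\ref{lem:beta-bound}, which bounds the right-hand side below by $\frac{(1-1/\log n)^{k^*}}{2(2\log n + 1)(\ln n + 1)}\,\opt$, where $k^*$ is the minimal index with $N_{k^*+1} < \sum_{\ell=0}^{k^*} N_\ell$. Lemma~\ref{lem:k-bound} then bounds $k^* \le \log n$, so $(1-1/\log n)^{k^*} \ge (1-1/\log n)^{\log n}$. The main (routine) step left is to observe that $(1-1/\log n)^{\log n}$ is bounded below by a positive absolute constant (it tends to $1/e$ as $n \to \infty$, and in any case is at least, say, $1/4$ for all $n \ge 2$). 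Combining these bounds yields
\[
\sum_{i=1}^n x_i(\v,\s)\, v_i(\s) \ \geq \ \frac{\Omega(1)}{2(2\log n + 1)(\ln n + 1)}\,\opt \ = \ \Omega\!\left(\frac{1}{\log^2 n}\right)\opt,
\]
which is the claimed $O(\log^2 n)$-approximation.

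There is essentially no obstacle here; all of the real work has been absorbed into the preceding lemmas (feasibility is handled by Lemmas~\ref{eq:chi-fesaible} and~\ref{lem:chi-bound}, truthfulness by Lemma~\ref{lem:main-epicir}, the replacement of $x_i$ by $y_i$ by Lemma~\ref{lem:alpha-ge-beta}, and the welfare lower bound by Lemmas~\ref{lem:beta-bound} and~\ref{lem:k-bound}). The proof of the theorem is therefore a short assembly: bound the welfare by the $y_i$-weighted sum, apply the threshold argument to pay for only an $O(\log n)$ loss from the geometric sequence of thresholds, and pay another $O(\log n)$ for the $\frac{1}{\ln n + 1}$ factor in $y_i$ together with the $\frac{1}{2\log n + 1}$ feasibility rescaling, for a total of $O(\log^2 n)$.
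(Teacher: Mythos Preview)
Your proposal is correct and follows essentially the same approach as the paper: the paper's proof also chains Lemma~\ref{lem:alpha-ge-beta}, Lemma~\ref{lem:beta-bound}, and Lemma~\ref{lem:k-bound} in exactly this order, and then notes that $(1-1/\log n)^{\log n}\approx 1/e$ to conclude the $O(\log^2 n)$ bound. (One tiny caveat: your parenthetical ``at least $1/4$ for all $n\ge 2$'' is not literally true for very small $n$---e.g.\ $n=2$ gives $0$---but this is the same asymptotic hand-wave the paper makes and is irrelevant to the $O(\log^2 n)$ claim.)
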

\begin{proof}
	The mechanism achieves expected welfare	of
	\begin{eqnarray*}
		\sum_{i=1}^n x_i(\v,\s)\cdot v_i(\s) & \geq & \sum_{i=1}^n y_i(\v,\s)\cdot v_i(\s) \\
		& \geq & \frac{(1-1/\log n)^{k^*}}{2(2\log n+1)(\ln n+1)}\opt \\
		& \geq & \frac{(1-1/\log n)^{\log n}}{2(2\log n+1)(\ln n+1)}\opt \\
		& \approx & \frac{\opt}{2e(2\log n+1)(\ln n+1)},
	\end{eqnarray*}
	where the first inequality follows Lemma~\ref{lem:alpha-ge-beta}, the second inequality follows Lemma~\ref{lem:beta-bound}, and the third inequality follows Lemma~\ref{lem:k-bound}.
\end{proof}

\paragraph{Remark about implementation using queries} In presenting the mechanism, we ask that each bidder bids their valuation to the mechanism. While sending the representation of an arbitrary function can be demanding, we note that the mechanism can actually be implemented using $2n-1$ queries to each buyer's valuation function. For every buyer $i$, we only need to access $v_i(\s)$, and $\{v_i(0_j, \s_{-j}), v_i(1_j, \s_{-j})\}$ for every $j\neq i$. This enough to determine the allocation \textit{and} payments, since an agent $i$ needs to know only $\{v_j(0_i,\s_{-i}),v_j(1_i,\s_{-i})\}_{j\neq i}$ in order to know the effect of increasing their value on the set $D_i(\v,\s)$, which in turn determines $i$'s allocation probability, giving enough information to compute the payments using Equation~\eqref{eq:payment}.

\paragraph{Extension to multi-dimensional signals} While the mechanism and analysis is presented in the context of single-dimensional signals, we can extend it to the domain of multi-dimensional signals. It would require that when defining the mechanism, we define the set $D_i$ as the set of bidders that satisfy the same condition with all of $i$'s signals getting their highest/lowest possible value. The proof of Lemmas~\ref{lem:chi-bound} will reason about zeroing out a set of signals of an agent, but submodularity carries on for sets of signals as well, and the proof will mostly be identical. Finally, for multi-dimensional signals, only the sufficient direction of the truthfulness characterization (Proposition~\ref{prop:epicir_suf}) holds, but this is what we need to show the mechanism is EPIC-IR.

\paragraph{Extension to $d$-SOS condition from Eden et al.~\cite{EFFGK}} Eden et al.~\cite{EFFGK} extend the $SOS$ condition to settings where the inequality in Equation~\eqref{def:sos} holds up to a $d$ multiplicative factor. We note that under $d$-SOS, we can adapt the mechanism to get an $O(d\log^2 n)$-approximation to the optimal welfare. The only change in the mechanism is by setting $\chi$ in Equation~\eqref{eq:allocation} to be $2d\log n+1$. This is since once can show that under  $d$-SOS, $\chi(G_{\v,\s})\leq 2d\log n +1$. This follows by changing the proof of Lemma~\ref{lem:chi-bound} to consider a vertex $j$ of out-degree larger than $d\log n$, and showing that under $d$-SOS, the left-hand side of Equation~\eqref{eq:geq_vj} is strictly larger than $d\cdot v_j(\s)$, and smaller than $d\cdot v_j(\s)$, deriving a contradiction.
\section{$\Theta(k)$-approximation for $k$-bounded Dependency}\label{sec:k-depend}

Consider the case where for each bidder $i$ there is a set $$B_i = \{j\neq i \ : \ \exists \s \mbox{ such that } v_i(\s) > v_i(0_j,\s_{-j})\},$$
that is,  $i$ depends only on the signals of bidders in $B_i$. 
%$B_i$ of at most $k$ bidders such that for every $j\notin B_i$ and $\s$, $v_i(\s)=v_i(0,\s_{-j})$, that is, $i$ depends only on the signals of bidders in $B_i$. \shuran{do we want $B_i$ to include $i$?} 
Consider the following dependency graph $G=(V,E)$ with $n$ nodes, one for each bidder $i$: there is an edge from $i$ to $j$ if $j\in B_i$. Let $\deg^+(i)$ be the out-degree of node $i$ in this graph, and let $k=\max_i \deg^+(i) = \max_i|B_i|$, which means that each bidder's value depends on at most $k$ other bidders' signals.  Consider the following mechanism: 
\begin{enumerate}
	\item Ask the bidders to report signals and valuations, $\tilde{s}_i$ and $\tilde{v}_i$.
	\item For bidder $i$, use all other bidders' reports except $i$'s to estimate the other bidders' valuations %\shuran{do we want to use $t$ or $T$?}
	$$
	T^{(i)}_j = \tilde{v}_j(\tilde{s}_{-i}, 0), \quad \forall j\neq i. 
	$$
	%   Use other bidders' reported signals and bidder $i$'s reported valuation to estimate bidder $i$'s valuation
	%   $$
	%   t^{(i)}_i = \tilde{v}_i(\tilde{s}_{-i}, 0).
	%   $$
	Bidder $i$ is one of the candidates if his valuation based on the reported messages is higher than these estimates for $i$ of all other bidders' estimated valuations
	$$
	\tilde{v}_i(\tilde{s}) \ge T^{(i)}_j,\quad \forall j\neq i,
	$$
	breaking ties by lexicographical order.% where if $\tilde{v}_i(\tilde{s}) = t^{(i)}_j$, we say that  $\tilde{v}_i(\tilde{s}) \ge t^{(i)}_j$ if $i<j$ (that is, break ties according to index).
	\item For each candidate $i$, allocate to $i$ with probability 
	\begin{align*}
		\frac{1}{2}\cdot \frac{1}{\max_{j\ :\ j\rightarrow i} \deg^+(j)+1}. %\quad \text{\sz{if there exists $j$ such that $(j,i)\in E$,}}		
	\end{align*}
%	\sz{and allocate to $i$ with probability $1/2$ otherwise.}
 %That is, 
 The probability is inversely proportional to the maximum out-degree of a node corresponding to an agent that depends on $i$'s signal. Charge payments according to Equation~\eqref{eq:payment}. %If bidder $i$ gets the item, his payment is his highest threshold $\max_{j\neq i} t^{(i)}_j$.	
\end{enumerate}

\begin{theorem}
	The mechanism is EPIC-IR, feasible, and gives an $2(k+1)$-approximation to the maximum social welfare. \label{thm:k-dep}
\end{theorem}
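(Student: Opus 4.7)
The plan is to handle the three claims in turn, with feasibility as the main hurdle.

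First, for EPIC-IR I would verify the two sufficient conditions of Proposition~\ref{prop:epicir_suf}. The candidacy test $\tilde v_i(\tilde{\s})\ge T^{(i)}_j=\tilde v_j(0_i,\tilde{\s}_{-i})$ for every $j\neq i$ depends on $\tilde v_i(\tilde{\s})$ only as a single scalar, while the right-hand sides are determined by $(\tilde{\v}_{-i},\tilde{\s}_{-i})$. Similarly, the conditional allocation probability $\tfrac{1}{2(\max_{j:j\to i}\deg^+(j)+1)}$ depends only on the dependency graph restricted to $\tilde{\v}_{-i}$ (since $B_j$ and $\deg^+(j)$ depend only on $\tilde v_j$ for $j\ne i$), not on $\tilde v_i$ or $\tilde s_i$. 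Thus $x_i$ is a function of $\tilde v_i(\tilde{\s})$ for fixed $(\tilde{\v}_{-i},\tilde{\s}_{-i})$, and it is monotone non-decreasing in that scalar because raising $\tilde v_i(\tilde{\s})$ can only make the candidacy inequalities easier to satisfy while leaving the conditional probability unchanged. Charging payments via Equation~\eqref{eq:payment} then gives an EPIC-IR mechanism.

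Second, for the approximation, let $i^*$ be the agent with $v_{i^*}(\s)=\opt$. I would first show $i^*$ is always a candidate: for every $j\neq i^*$, by monotonicity, $T^{(i^*)}_j=v_j(0_{i^*},\s_{-i^*})\le v_j(\s)\le v_{i^*}(\s)$. Because $\deg^+(j)\le k$ for every $j$, the denominator in $x_{i^*}$ is at most $k+1$, so $i^*$ is allocated with probability at least $\tfrac{1}{2(k+1)}$, giving expected welfare at least $\opt/(2(k+1))$.

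Third and hardest, feasibility. The plan hinges on a structural lemma about the set of candidates $C$: \emph{if $i,j\in C$ and $v_i(\s)>v_j(\s)$, then $j\in B_i$.} Indeed, $j$'s candidacy gives $v_j(\s)\ge T^{(j)}_i=v_i(0_j,\s_{-j})$; if $j\notin B_i$ then the right-hand side equals $v_i(\s)>v_j(\s)$, a contradiction. Equal-value pairs in $C$ reduce to the same conclusion once the lex tie-breaking in the candidacy rule is unwound. Order the candidates $i_1,i_2,\ldots,i_m$ in decreasing value with lex as a secondary key. The lemma then gives $i_t\in B_{i_s}$ for every $s<t$; in particular $\{i_2,\ldots,i_m\}\subseteq B_{i_1}$, so $\deg^+(i_1)\ge m-1$. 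Since $i_t\in B_{i_1}$ is precisely the edge $i_1\to i_t$ in $G$, for every $t\ge 2$ we get $\max_{j:\,i_t\in B_j}\deg^+(j)\ge \deg^+(i_1)\ge m-1$, hence $x_{i_t}(\v,\s)\le\tfrac{1}{2m}$. Summing with the trivial bound $x_{i_1}(\v,\s)\le\tfrac{1}{2}$ yields
\begin{align*}
\sum_{i\in C} x_i(\v,\s) \;\le\; \tfrac{1}{2}+(m-1)\cdot\tfrac{1}{2m} \;=\; \tfrac{2m-1}{2m} \;<\; 1.
\end{align*}
The crux will be pinning down this structural lemma: once it is in hand, the chain of $B$-memberships forces the top candidate's out-degree to be at least $m-1$, which in turn inflates every other candidate's $M$-value to at least $m-1$, so the $\tfrac{1}{2}$ slack in the allocation formula absorbs the entire probability sum.
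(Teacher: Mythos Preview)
Your proposal is correct and follows essentially the same route as the paper: verify Proposition~\ref{prop:epicir_suf} for EPIC-IR, show the top agent is a candidate and use $\deg^+\le k$ for the approximation, and for feasibility prove that every candidate other than the top one must lie in $B_{i^*}$, so $\deg^+(i^*)\ge m-1$ and each non-top candidate gets probability at most $\tfrac{1}{2(\deg^+(i^*)+1)}\le\tfrac{1}{2m}$. The only place to tighten is tie handling: the paper fixes $i^*$ as the \emph{minimal-index} maximizer so that the lex tie-break both makes $i^*$ a candidate and excludes any equal-valued $j\notin B_{i^*}$; your ``equal-value pairs reduce to the same conclusion'' line should be made explicit in this way rather than left as a remark.
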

\begin{proof}
	Notice that $T^{(i)}_j$ is independent of $i$'s bid, and that $i$'s allocation is non-decreasing in $v_i(\s)$; therefore, the mechanism satisfies the conditions for Proposition~\ref{prop:epicir_suf}, which implies the mechanism can be implemented in an EPIC-IR manner.

	For feasibility, we notice the following: let $i^* = \min\{i\ : \ i\in \arg\max_{j\in [n]}v_j(\s)\}$ (the lowest index highest valued agent). We claim that only $i^*$, and agents $j$ such that there is an edge from $i^*$ to $j$ can be candidates by the mechanism analyzed in this section. The reason for this is that by the way we break ties, $i^*$ is a candidate, but for every agent $j$ such that there is no edge between $i^*$ and $j$, $$v_j(\s) \leq v_{i^*}(\s)=v_{i^*}(\s_{-i},0_i) = T_{i^*}^{(j)},$$ where the first equality follows the definition of $G$. If $v_j(\s) = T_{i^*}^{(j)}$, then this is treated as an inequality since $i^*$ is the minimal index agent with a maximal value. Therefore, $j$ cannot be a candidate. 
	For agents $j\neq i^*\in B_{i^*}$ who have an edge from $i^*$ and thus can possibly be a candidate, the probability of $j$ getting allocated is bounded by $1/2(\deg^+(i^*)+1)$. Therefore,
	\begin{eqnarray*}
		\sum_i x_i(\v,\s) &= &x_{i^*}(\v,\s) + \sum_{i\ :\ (i^*,i)\in E} x_{i}(\v,\s) \\
		&\leq &\frac{1}{2} + \sum_{i\ :\ (i^*,i)\in E}\frac{1}{2}\cdot \frac{1}{\max_{j:(j,i)\in E} \deg^+(j)+1} \\
		& \leq & \frac{1}{2} + \sum_{i\ :\ (i^*,i)\in E}\frac{1}{2}\cdot \frac{1}{ \deg^+(i^*)+1} \ < \ 1.
	\end{eqnarray*} 

	As for approximation, notice that $i^*$ is a candidate, and is allocated with probability $$\frac{1}{2}\cdot \frac{1}{\max_{j:(j,i^*)\in E} \deg^+(j)+1} > \frac{1}{2(k+1)}.$$
\end{proof}

As in the case of the previous mechanism we presented, this mechanism can be implement using a polynomial number, $n(n-1)$, of queries to the valuation oracle of the agents.
We show that our mechanism is tight up to a factor of 2.

\begin{proposition}
	For any $k$, we have the following inapproximability results:
	\begin{enumerate}
		\item With public valuations where no EPIC-IR mechanism can achieve a better approximation than $k+1$.
		\item With private valuations that satisfy the single-crossing condition where no EPIC-IR mechanism can achieve a better approximation than $k+1$.
	\end{enumerate}
\end{proposition}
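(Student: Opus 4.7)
For both parts I exhibit an instance with $n=k+1$ agents and use the two-scenario template from the proof of Proposition~\ref{prop:sc-n}: in scenario $A$ all agents have the same value, so feasibility $\sum_i x_i\le 1$ forces some agent $i^*$ with $x_{i^*}\le 1/(k+1)$; in scenario $B$ the signals (or, for the private case, also $i^*$'s valuation function) are modified so that essentially only $i^*$ has positive value, while an EPIC-IR monotonicity condition prevents the mechanism from boosting $i^*$'s allocation probability above $1/(k+1)$. The ratio between $\opt$ and the mechanism's welfare in $B$ is then $k+1$.

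\textbf{Part 1 (public valuations).} Set $v_i(\s)=\prod_{j\neq i}s_j+\varepsilon s_i$ for each $i\in[k+1]$, so every $v_i$ depends on exactly the $k$ other signals and is strictly increasing in $s_i$ (the $\varepsilon s_i$ term plays the same role as in the footnote to Proposition~\ref{prop:sc-n}). Take $\s^A=\mathbf{1}$, which makes every $v_i(\s^A)=1+\varepsilon$ and lets me fix $i^*$ with $x_{i^*}(\s^A)\le 1/(k+1)$. In scenario $B$ set $s^B_{i^*}=0$ and $\s^B_{-i^*}=\mathbf{1}$: then $v_{i^*}(\s^B)=1$ while $v_j(\s^B)=\varepsilon$ for every $j\neq i^*$, since each such product contains $s_{i^*}=0$. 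The standard exchange argument on the EPIC inequalities at $s_{i^*}$ versus $s^B_{i^*}$ gives that $x_{i^*}(\cdot,\s_{-i^*})$ is non-decreasing in $s_{i^*}$, hence $x_{i^*}(\s^B)\le x_{i^*}(\s^A)\le 1/(k+1)$. The achievable welfare in $B$ is then at most $1/(k+1)+k\varepsilon$ while $\opt(\s^B)=1$, giving ratio $k+1-o(1)$ as $\varepsilon\to 0$.

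\textbf{Part 2 (private valuations, single-crossing).} Re-run the argument of Proposition~\ref{prop:sc-n} with $n$ replaced by $k+1$. Case 1: identical valuations $v_i(\s)=\prod_{l=1}^{k+1}s_l$ (each depending on the $k$ other signals, with single-crossing holding by symmetry) and signals $\s=\mathbf{1}$, so all values are $1$ and some agent, WLOG $i=1$, has $x_1\le 1/(k+1)$. Case 2: keep $v_j$ unchanged for $j\neq 1$ and set $v_1'(\s)=0.5+s_1$ (depending on zero other signals), with $s_1'=0$ and $\s_{-1}=\mathbf{1}$; a routine check shows single-crossing still holds because no $v_j$ with $j\neq 1$ has marginal in $s_1$ exceeding $1$. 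Since $\s_{-1}$ and $\v_{-1}$ are identical across the two cases and $v_1'(s_1',\s_{-1})=0.5<1=v_1(1,\s_{-1})$, Proposition~\ref{prop:epicir_nec} gives $x_1$ in Case 2 at most $x_1$ in Case 1, hence $\le 1/(k+1)$. Only agent $1$ has positive value in Case 2 (every $v_j$ for $j\neq 1$ vanishes because $s_1=0$), so the mechanism's expected welfare is at most $0.5/(k+1)$ while $\opt=0.5$, yielding ratio $k+1$.

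\textbf{Main obstacle.} The only delicate step is justifying the cross-scenario allocation comparison. In Part 1 this is the classical Myerson monotonicity in each agent's reported signal, derivable from summing two EPIC inequalities. In Part 2 it is precisely the monotonicity of $x_i$ in $v_i(\s)$ asserted by Proposition~\ref{prop:epicir_nec}, which is what legitimizes comparing a Case-1 profile to a Case-2 profile that has a different valuation function for the target agent. Beyond these, verifying single-crossing and that every valuation used depends on at most $k$ other signals is routine.
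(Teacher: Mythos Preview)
Your proposal is correct and follows essentially the same approach as the paper: the same product-valuation instances on $k+1$ agents, the same use of feasibility at the all-ones profile combined with Myerson-type monotonicity (for Part~1) and Proposition~\ref{prop:epicir_nec} (for Part~2) to transfer the $1/(k+1)$ bound to the ``only $i^*$ has nontrivial value'' profile. The only cosmetic differences are that the paper phrases Part~1 as a contradiction (assume better than $k{+}1$, deduce $\sum_i x_i(\mathbf{1})>1$) whereas you phrase it in the forward direction, and the paper pads with extra dummy agents $v_i=\varepsilon s_i$ for $i>k+1$ while you simply take $n=k+1$.
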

\begin{proof}
	To prove the first claim, consider the following instance, which is a simple adaptation of the instance in Section 3 of \cite{EFFG}. In the slightly modified instance, $$v_i(\s) = \prod_{j \in [k+1]\setminus\{i\}}s_j + \epsilon s_i$$ for $i\in [k+1]$, and $v_i(s)=\epsilon s_i$ for the rest of the agents ($\epsilon$ is arbitrarily small). It is easy to see that the $\max_i \deg^+(i)=k$. For $i\in [k+1]$. Let $\s^{i}$ be the signal profile where $s_i^i=0$, and $s_i^j=1$ for $j\neq i$. Notice that under signals $\s^i$, $i$ has value $1$, and all other agents have value an arbitrarily small value $\epsilon$. In order to get a better than $k+1$ approximation, agent $i$ has to have an allocation probability $x_i(\s^i) > 1/(k+1)$. Otherwise, decreasing $\epsilon$ could get us arbitrarily close to or smaller than $k+1$-approximation. By Proposition~\ref{prop:epicir_nec}, by monotonicity, for each $i\in [k+1]$, $x_i(\textbf{1})>1/ (k+1)$ as well, which implies $\sum_{i\in [k+1]}x_i(\s)>1$, contradicting feasibility.\footnote{The same bound holds even for the Bayesian setting, where the seller knows the distribution over signals of the agents, as proved in~\cite{EFFG}.}
	
	To prove the second claim, consider the adaptation of the instances in Proposition~\ref{prop:sc-n}:
	\begin{itemize}
			\item \textbf{Case 1.} Each agent $i\in[k+1]$ has valuation $v_i(\s)=\prod_{j=1}^{k+1} s_j+\epsilon i$, and $v_i(\s)= \epsilon s_i$ for $i\notin [k+1]$. Signal profile is $\s=\mathbf{1}$. The valuations satisfy single crossing. There must be an agent $i\in [k+1]$ who's allocation probability is at most $1/(k+1)$. Assume without loss it is agent $1$.
			
			\item \textbf{Case 2.} $v'_1=0.5+s_1$, and $v_2,\ldots, v_n$ are as before. Note that single-crossing is still satisfied. The signals are $s'_1=0$, $\s_{-1}=\textbf{1}$.  
	\end{itemize}
	It is easy to see that $\max_i \deg^+(i)=k$. By Proposition~\ref{prop:epicir_nec}, agent~$1$'s allocation probability in Case 2 is at most $1/(k+1)$. Since he's the only agent with non-negligible value, we get that no EPIC-IR mechanism can get better than $k+1$-approximation.
\end{proof}

\section{Conclusion} \label{sec:conclusion}
In this paper, we open up a new direction for studying interdependent value settings. Namely, we consider what we believe might be a more realistic model where agents' valuation functions are not known to the mechanism designer, and study EPIC-IR mechanisms in this context. We provide the first competitive bound for such a setting by proving that under the submodularity over signals condition, there exists a randomized mechanism with a welfare within $O(\log^2 n)$ of the optimal welfare. We also show that in stark contrast to the public valuations setting, only assuming single-crossing can result in terrible approximation bounds. One major open question left behind by our work is the following.

\vspace{0.3cm}
\noindent\textbf{Main open question.} Is there a constant factor approximation EPIC-IR mechanism for interdependent values with private valuation functions when the valuations satisfy the submodularity over signals condition?
\vspace{0.3cm}

We note that the only lower bound known for SOS valuations is for the public valuations setting, where Eden et al.~\cite{EFFGK} show no EPIC-IR mechanism can have a better approximation than 2, which it is not even tight for the public valuations setting.

It might also be interesting to consider other restrictions over the valuations which might lead to different insightful bounds. Considering restricted valuations can be justified since in the interdependent settings, as even in the public valuations case, unrestricted valuations give terrible guarantees~\cite{EFFG}. 
While our mechanism is prior-free, considering Bayesian settings, where the seller has distribution over possible valuation functions of the buyer might lead to improved bounds, as well as considering other objectives, such as second-best or max-min robust guarantees. Finally, one might consider more general combinatorial auction settings or a more difficult benchmark such as revenue maximization.

\paragraph{Acknowledgments} We deeply thank Yiling Chen for the helpful discussions and feedback.

\newpage

\bibliographystyle{acm}
\bibliography{masterbib.bib}

\newpage

\appendix

\section{Missing Proofs} \label{sec:prelims-proofs}

\subsection{Proof of Proposition~\ref{prop:epicir_suf}} \label{sec:epicir-proof}
	First we show the sufficient direction.  Consider the payment function 
	\begin{eqnarray*}
		p_i(\v,\s) = x_i(\v_{-i},\s_{-i},v_i(\s))v_i(\s) - \int_{0}^{v_i(\s)}x_i(\v_{-i},\s_{-i},t) dt. %\label{eq:payment}
	\end{eqnarray*}
	First, notice that for truthful $\s_{-i}$, and any $\v_{-i}$, we have that $$u_i(\v_{-i},\s_{-i};(v_i,s_i);(v_i,s_i)) = x_i(\v_{-i},\s_{-i},v_i(\s))v_i(\s) - p_i(\v.\s) = \int_{0}^{v_i(\s)}x_i(\v_{-i},\s_{-i},t) dt  \geq 0,$$
	which implies EPIR.
	
	To show EPIC, consider truthful $\s_{-i}$ (and any $\v_{-i}$), and consider arbitrary $v'_i$, $s'_i$.  EPIC implies that %If $v'_i(s'_i,\s_{-i}) < v_i(\s)$, then we have 
	\begin{eqnarray*}
		u_i(\v_{-i},\s_{-i};(v'_i,s'_i);(v_i,s_i)) & =  & x_i(\v_{-i},\s_{-i},v'_i(s'_i,\s_{-i}))v_i(\s) - p_i(\v_{-i},\s_{-i},v'_i,s'_i)\\
		& = & x_i(\v_{-i},\s_{-i},v'_i(s'_i,\s_{-i}))v_i(\s) \\& & - ( x_i(\v_{-i},\s_{-i},v'_i(s'_i,\s_{-i}))v'_i(s'_i,\s_{-i}) -  \int_{0}^{v'_i(s'_i,\s_{-i})}x_i(\v_{-i},\s_{-i},t) dt)\\
		&\leq & \int_{0}^{v_i(\s)}x_i(\v_{-i},\s_{-i},t) dt \ =\ u_i(\v_{-i},\s_{-i};(v_i,s_i);(v_i,s_i)),	
	\end{eqnarray*}
	which is equivalent to 
	$$x_i(\v_{-i},\s_{-i},v'_i(s'_i,\s_{-i}))(v_i(\s) - v'_i(s'_i,\s_{-i})) \leq    \int_{v'_i(s'_i,\s_{-i})}^{v_i(\s)}x_i(\v_{-i},\s_{-i},t) dt.$$ 
	If $v'_i(s'_i,\s_{-i}) \leq v_i(\s)$, then by monotonicity of $x_i$, the rhs is no smaller than $\int_{v'_i(s'_i,\s_{-i})}^{v_i(\s)}x_i(\v_{-i},\s_{-i},v'_i(s'_i,\s_{-i})) dt,$ which is equal to the lhs. If $v'_i(s'_i,\s_{-i}) \geq v_i(\s)$, then EPIC is equivalent to showing $$x_i(\v_{-i},\s_{-i},v'_i(s'_i,\s_{-i}))(v'_i(s'_i,\s_{-i})-v_i(\s)) \geq    \int_{v_i(\s)}^{v'_i(s'_i,\s_{-i})}x_i(\v_{-i},\s_{-i},t) dt,$$
	which again follows since, by monotonicity, the rhs is smaller than $\int_{v_i(\s)}^{v'_i(s'_i,\s_{-i})}x_i(\v_{-i},\s_{-i},v'_i(s'_i,\s_{-i})) dt$ which equals the lhs.

%Old - Alon wrote
\begin{comment}
	Let $v_i,s_i$ and $v'_i,s'_i$ be such that $v_i(s_i,\s_{-i}) > v'_i(s'_i,\s_{-i})$. Every IC mechanism must satisfy the following two inequalities:
	\begin{eqnarray*}
		v_i(s_i,\s_{-i})x_i(\v_{-i},\s_{-i};v_i,s_i) - p_i(\v_{-i},\s_{-i};v_i,s_i) \ & \geq & \  v_i(s_i,\s_{-i})x_i(\v_{-i},\s_{-i};v'_i,s'_i) - p_i(\v_{-i},\s_{-i};v'_i,s'_i)\\
		v'_i(s'_i,\s_{-i})x_i(\v_{-i},\s_{-i};v'_i,s'_i) - p_i(\v_{-i},\s_{-i};v'_i,s'_i) \ & \geq & \  v'_i(s'_i,\s_{-i})x_i(\v_{-i},\s_{-i};v_i,s_i) - p_i(\v_{-i},\s_{-i};v_i,s_i).
	\end{eqnarray*}
	
	Summing up the two inequalities and rearranging yields:
	$$(v_i(s_i,\s_{-i})-v'_i(s'_i,\s_{-i}))(x_i(\v_{-i},\s_{-i};v_i,s_i)-x_i(\v_{-i},\s_{-i};v'_i,s'_i)) \ \geq \ 0,$$
	and since $v_i(s_i,\s_{-i}) > v'_i(s'_i,\s_{-i})$, this implies $x_i(\v_{-i},\s_{-i};v_i,s_i)\geq x_i(\v_{-i},\s_{-i};v'_i,s'_i).$
\end{comment}

\subsection{Proof of Proposition~\ref{prop:epicir_nec}}

For the necessary conditions, we need some more tools. Roughgarden Talgam-Cohen~\cite{RTC} introduced an equivalent of Myerson's theorem~\cite{Myerson} for ex-post IC and IR implementation in that IDV model when the valuation functions are public. We use it to prove our characterization in the when the valuations are private.

\begin{proposition}[\cite{RTC}] \label{prop:RTC}
	In the IDV setting with public valuations, a mechanism is EPIC-IR if and only if for every $i, \s_{-i}$, $x_i(\cdot, \s_{-i})$ is monotonically non-decreasing in $s_i$, and the following payment-identity holds:
	\begin{eqnarray}
		& & p_i(\s)\ =\ x_i(\s)v_i(\s)-\int_{v_i(0,\s_{-i})}^{v_i(s_i,\s_{-i})}x_i(v_i^{-1}(t|\s_{-i}),\s_{-i})dt-(x_i(0,\s_{-i})v_i(0,\s_{-i}) - p_i(0,\s_{-i})),\nonumber\\
		& & \mbox{where } p_i(0,\s_{-i}) \ \leq\  x_i(0,\s_{-i})v_i(0,\s_{-i}).\label{eq:payment-rtc}
	\end{eqnarray}
\end{proposition}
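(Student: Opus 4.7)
The statement is the IDV analogue of Myerson's payment identity, so my plan is to adapt the standard envelope-theorem argument, reparameterized from the signal $s_i$ to the induced value $v_i(s_i,\s_{-i})$. The two directions will be proved separately after fixing $i$ and $\s_{-i}$ throughout.

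For the necessary direction, I will start with the standard ``swap'' trick: combining the ex-post IC inequalities obtained by exchanging signals $s_i$ and $s_i'$ yields $(v_i(s_i,\s_{-i}) - v_i(s_i',\s_{-i}))(x_i(s_i,\s_{-i}) - x_i(s_i',\s_{-i})) \geq 0$, and since $v_i(\cdot,\s_{-i})$ is strictly increasing, monotonicity of $x_i(\cdot,\s_{-i})$ follows. For the payment identity, let $u_i(s_i) := v_i(s_i,\s_{-i})x_i(s_i,\s_{-i}) - p_i(s_i,\s_{-i})$. Since $u_i(s_i)$ is the upper envelope of the family of functions $v_i(s_i,\s_{-i})\,x_i(s_i',\s_{-i}) - p_i(s_i',\s_{-i})$ indexed by deviations $s_i'$ (each affine in $v_i(s_i,\s_{-i})$), the envelope theorem gives $u_i'(s_i) = \frac{\partial v_i(s_i,\s_{-i})}{\partial s_i}\cdot x_i(s_i,\s_{-i})$ almost everywhere. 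Integrating from $0$ to $s_i$ and performing the change of variables $t = v_i(\tau,\s_{-i})$ (well-defined by strict monotonicity of $v_i$ in $s_i$) gives $u_i(s_i) = u_i(0) + \int_{v_i(0,\s_{-i})}^{v_i(s_i,\s_{-i})} x_i(v_i^{-1}(t|\s_{-i}),\s_{-i})\, dt$. Rearranging and noting $u_i(0) = x_i(0,\s_{-i})v_i(0,\s_{-i}) - p_i(0,\s_{-i})$ yields the stated payment identity, and ex-post IR at $s_i=0$ gives exactly the boundary condition $p_i(0,\s_{-i}) \leq x_i(0,\s_{-i})v_i(0,\s_{-i})$.

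For the sufficient direction, I plug the prescribed payment rule back into the utility expression. Truthful reporting at signal $s_i$ produces utility $u_i(0) + \int_{v_i(0,\s_{-i})}^{v_i(s_i,\s_{-i})} x_i(v_i^{-1}(t|\s_{-i}),\s_{-i})\, dt$, which is nonnegative by the boundary condition together with $x_i \geq 0$, giving ex-post IR. A short computation shows the gain from truth-telling over misreporting $s_i'$ equals $\int_{v_i(s_i',\s_{-i})}^{v_i(s_i,\s_{-i})} \bigl[x_i(v_i^{-1}(t|\s_{-i}),\s_{-i}) - x_i(s_i',\s_{-i})\bigr]\, dt$, which is nonnegative in both of the cases $s_i \geq s_i'$ and $s_i < s_i'$ (in the first case the integrand is nonnegative and the limits are in order; in the second both flip sign) by monotonicity of $x_i(\cdot,\s_{-i})$. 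This establishes ex-post IC.

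The main subtlety I anticipate is the change of variables, which crucially relies on $v_i(\cdot,\s_{-i})$ being strictly monotone and invertible; this is exactly where the \emph{public} valuation assumption matters, because the mechanism's knowledge of $v_i$ is what enables the reparameterization by induced value and hence the clean Myersonian payment formula. In the private valuation case this move breaks down because an agent can misreport $v_i$ as well, which is precisely why Propositions~\ref{prop:epicir_suf}--\ref{prop:epicir_nec} must replace this identity with the coarser integral form \eqref{eq:payment} and additionally demand that the allocation factor through $v_i(\s)$.
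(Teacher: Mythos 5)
Your proof is correct and is the standard Myersonian envelope/payment-identity argument adapted to the interdependent setting by reparameterizing from $s_i$ to the induced value $v_i(s_i,\s_{-i})$ — which is exactly the argument behind the cited result. Note that the paper itself does not prove Proposition~\ref{prop:RTC}; it imports it from Roughgarden and Talgam-Cohen~\cite{RTC} and only uses it as a tool in the proof of Proposition~\ref{prop:epicir_nec}, so there is no in-paper proof to diverge from.
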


We use the equivalent form of Equation~\eqref{eq:payment-rtc} which is more convenient for our purposes:

\begin{eqnarray}
	& & p_i(\s)\ =\ x_i(\s)v_i(\s)-\int_{v_i(0,\s_{-i})}^{v_i(s_i,\s_{-i})}x_i(v_i^{-1}(t|\s_{-i}),\s_{-i})dt - q_i(0,\s_{-i}),\nonumber\\
	& & \mbox{where } q_i(0,\s_{-i}) \ \geq\  0.\label{eq:payment-rtc2}
\end{eqnarray}

For the proof, we also use the following property.

\begin{theorem}[Lebesgue Differentiation]
	Let $f$ be an integrable function on $\mathbb{R^+}$. Then the following equality holds almost everywhere on $\mathbb{R^+}$.
	$$
	f(y) = \lim_{r\to 0^+} \frac{1}{r} \int_{[y,y+r]} f \,d\mu,
	$$
	where $\mu$ is the Lebesgue measure on $\mathbb{R^+}$.
\end{theorem}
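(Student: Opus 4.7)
The plan is to prove this by the classical three-step reduction: establish the result for a dense subclass of ``nice'' functions, control the general case via a maximal inequality, and combine the two using an approximation argument. First I would treat the easy case of continuous compactly-supported $g$: writing $A_r g(y) := \frac{1}{r}\int_{[y,y+r]} g\,d\mu$, a direct estimate gives $|A_r g(y) - g(y)| \leq \sup_{t\in[y,y+r]} |g(t)-g(y)|$, which tends to $0$ as $r\to 0^+$ at every point of continuity. Thus the conclusion holds everywhere for $g \in C_c(\mathbb{R}^+)$, and since $C_c(\mathbb{R}^+)$ is dense in $L^1(\mathbb{R}^+)$ the only remaining task is to transfer this pointwise convergence to arbitrary integrable $f$.

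The central tool is the one-sided Hardy--Littlewood maximal operator $Mf(y) := \sup_{r>0} A_r |f|(y)$. The main technical obstacle, and the step I would spend the most care on, is proving the weak-type $(1,1)$ inequality
$$\mu(\{y : Mf(y) > \lambda\}) \leq \frac{C}{\lambda}\|f\|_{1}$$
for an absolute constant $C$. This I would obtain from a Vitali-style covering argument: for each $y$ in the level set choose $r_y$ with $A_{r_y} |f|(y) > \lambda$, extract from $\{[y, y+r_y]\}$ a countable disjoint subfamily whose threefold enlargement still covers the level set, and sum the bounds $r_y \leq \lambda^{-1} \int_{[y,y+r_y]} |f|\,d\mu$ to control the measure of the enlarged union by $3\lambda^{-1}\|f\|_1$.

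With the maximal inequality available, I would fix $\alpha > 0$ and set $E_\alpha := \{y : \limsup_{r\to 0^+} |A_r f(y) - f(y)| > \alpha\}$. For any $\epsilon > 0$, choose $g \in C_c(\mathbb{R}^+)$ with $\|f - g\|_1 < \epsilon$ and let $h = f - g$. Since $A_r g(y) \to g(y)$ pointwise, the triangle inequality yields
$$\limsup_{r \to 0^+} |A_r f(y) - f(y)| \leq Mh(y) + |h(y)|$$
for every $y$, so $E_\alpha \subseteq \{Mh > \alpha/2\} \cup \{|h| > \alpha/2\}$. Applying the weak-$(1,1)$ bound to the first set and Chebyshev's inequality to the second gives $\mu(E_\alpha) \leq (2C+2)\epsilon/\alpha$. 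Since $\epsilon$ was arbitrary, $\mu(E_\alpha) = 0$, and taking the countable union over $\alpha = 1/k$ shows that the set where $A_r f(y)$ fails to converge to $f(y)$ has Lebesgue measure zero, which is the claimed statement.
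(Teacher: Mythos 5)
Your proof is correct, but note that the paper does not prove this statement at all: it is quoted verbatim as the classical Lebesgue Differentiation Theorem and used as a black box in the proof of Proposition 2.2 (to conclude that $g(y)=x_{v_i}(y)-x_{\hat v_i}(y)$ vanishes almost everywhere once all its local averages vanish). What you have written is the standard textbook proof --- pointwise convergence for $C_c$, the weak-type $(1,1)$ bound for the one-sided Hardy--Littlewood maximal operator via a Vitali covering, and the $\varepsilon$-approximation argument with the level sets $E_\alpha$ --- and all the steps are sound. The only place where a careful write-up needs a word of justification is the covering step: the family $\{[y,y+r_y]\}$ may have unbounded lengths, so one either first restricts to a compact subset of the level set (which forces the relevant $r_y$ to be bounded, since $r_y\le \lambda^{-1}\|f\|_1$ already gives this for free here) or invokes the version of the Vitali lemma with a factor $5$ rather than $3$. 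This is routine and does not affect the validity of the argument.
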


    Now, we show the necessary direction, starting with showing that any allocation rule $x_i(\v, \s)$ must be monotone in $v_i(\s)$.

    By Definition~\ref{def:EPIC-IR} of ex-post IC, we have the following, for any signal profile $\s$, valuation functions $\v$, bidder $i$, and reports $s'_i, v'_i$,
	$$
	v_i(\s) x_i(\v, \s) - p_i(\v, \s) \ge v_i(\s) x_i(\v_{-i}, v_i', \s_{-i}, s_i') - p_i(\v_{-i}, v_i', \s_{-i}, s_i').
	$$
	%Similarly, individual rationality gives:  $$v_i(\s) x_i(\s, \v) - p_i(\s, \v) \ge 0.	$$
	
	Then for any EPIC mechanism $M=(x,p)$, we must have 
	%$$v_i(\s_{-i}, s_i) x_i(\s_{-i}, s_i, \v_{-i}, v_i) - p_i(\s_{-i}, s_i, \v_{-i}, v_i) \ge v_i(\s_{-i}, s_i) x_i(\s_{-i}, \hat s_i, \v_{-i}, \hat v_i) - p_i(\s_{-i}, \hat s_i, \v_{-i}, \hat v_i).$$
	$$
	v_i(\s) \bigg  ( x_i(\v, \s) - x_i(\v_{-i}, \hat v_i, \s_{-i}, \hat s_i)  \bigg ) \ge p_i(\v, \s)  - p_i(\v_{-i}, \hat v_i, \s_{-i}, \hat s_i).
	$$
	and
	$$
	p_i(\v, \s)  - p_i(\v_{-i}, \hat v_i, \s_{-i}, \hat s_i) \ge \hat v_i(\s_{-i}, \hat s_i) \bigg  ( x_i(\v, \s) - x_i(\v_{-i}, \hat v_i, \s_{-i}, \hat s_i)  \bigg ).
	$$
	That is,
	$$
	v_i(\s) \bigg  ( x_i(\v, \s) - x_i(\v_{-i}, \hat v_i, \s_{-i}, \hat s_i)  \bigg ) \ge \hat v_i(\s_{-i}, \hat s_i) \bigg  ( x_i(\v, \s) - x_i(\v_{-i}, \hat v_i, \s_{-i}, \hat s_i)  \bigg ).
	$$	
	Then in order to satisfy this inequality, whenever it is the case that $v_i(\s) > \hat v_i(\s_{-i}, \hat s_i)$, it must also be the case that $x_i(\v, \s) \geq x_i(\v_{-i}, \hat v_i, \s_{-i}, \hat s_i)$, which proves the first condition. %Therefore $x_i(\v, \s)$ must be monotone non-decreasing in $v_i(s)$. 

\begin{comment}
    Then if $v_i(\s) = \hat v_i(\s_{-i}, \hat s_i)$, it must be that $x_i(\v, \s) = x_i(\v_{-i}, \hat v_i, \s_{-i}, \hat s_i)$.
    \shuran{The inequality is equivalent to 
    $$
    \bigg(v_i(\s)- \hat v_i(\s_{-i}, \hat s_i)\bigg) \bigg  ( x_i(\v, \s) - x_i(\v_{-i}, \hat v_i, \s_{-i}, \hat s_i)  \bigg ) \ge 0.
    $$
    If $v_i(\s) = \hat v_i(\s_{-i}, \hat s_i)$, then the inequality trivially holds?}
    This concludes that (i) and (ii) are necessary and thus the proof of Proposition~\ref{prop:epicir}.
\end{comment}

    We then prove the second condition. Consider any $i$ and fixed $\v_{-i}, \s_{-i}$. By our assumption, $v_i(s_i, \s_{-i})$ is a monotonically increasing function of $s_i$. Define $v_i^{-1}(\cdot)$ to be the inverse function of $v_i(\cdot, \s_{-i})$ for the fixed $\s_{-i}$ and define function $x_{v_i}: \mathbb{R}^+ \to \mathbb{R}^+$ with $$x_{v_i}(y) = x_i(\v_{-i}, v_i, s_{-i}, v_i^{-1}(y)),$$ which maps a possible valuation $y$ of bidder $i$ to the corresponding allocation $x_i$ when bidder $i$'s valuation function is $v_i$. We assume that $x_{v_i}$ is an integrable function on $[v_i(0, \s_{-i}),+\infty)$. Then we show that for any $v_i, \hat v_i \in V_i$, defining $L = \max\{v_i(0, \s_{-i}),\hat v_i(0,\s_{-i})\}$, we have
    \begin{align*}
    x_{v_i}(\cdot) = x_{\hat v_i} (\cdot) \text{ almost everywhere on } [L, +\infty), 	
    \end{align*}
    that is, the set of points $y\in [L, +\infty)$ at which $x_{v_i}(y) \neq x_{\hat v_i} (y)$ has  measure zero. % Formally, for any $\varepsilon>0$, there exists a sequence of intervals $I_1, I_2, \dots$ in $\mathbb{R}^+$ with total length $\le \varepsilon$ and 
   % $$
   % \{y: f_{v_i}(y) \neq f_{\hat v_i} (y)\} \subseteq I_1 \cup I_2 \cup \cdots. 
   % $$

	We prove the statement by showing that 
	\begin{align*} 
	\int_{L}^y x_{v_i}(t) \, dt - \int_{L}^y x_{\hat v_i}(t) \, dt = \text{a constant, for all } y>L.
	\end{align*}
	Note that for fixed $\s_{-i}, \v_{-i}$ and $v_i(\cdot)$, the allocation function $x_{v_i}(\cdot)$ becomes a single-argument function. We can thus use the well-known payment identity function given by~\cite{Myerson} and~\cite{RTC} (Proposition~\ref{prop:RTC}). According to Proposition~\ref{prop:RTC}, the payment rule of a ex-post IC mechanism  $(x,p)$ must satisfy the following for fixed $\v_{-i}, \s_{-i}$ and  fixed valuation functions $v_i(\cdot), \hat v_i(\cdot)$,
	\begin{align}
	p_i(\v_{-i},v_i, \s_{-i},v_i^{-1}(y))\ = y\cdot x_{v_i}(y) -\int_{v_i(0,\s_{-i})}^{y}x_{v_i}(t)\, dt -x_i(\v_{-i}, v_i, \s_{-i}, 0)v_i(0,\s_{-i}) + p_i(\v_{-i}, v_i, \s_{-i},0)\label{eqn:identity1}\\
	%p_i(\v_{-i},v_i, \s_{-i},v_i^{-1}(y)) = p_i(\v_{-i}, v_i, \s_{-i},v_i^{-1}(0)) + y\cdot f_{v_i}(y) - \int_0^y f_{v_i}(t)\, dt  \\
	p_i(\v_{-i}, \hat v_i, \s_{-i},\hat v_i^{-1}(y)) =   y\cdot x_{\hat v_i}(y) - \int_{\hat v_i(0, \s_{-i})}^y x_{\hat v_i}(t)\, dt  -x_i(\v_{-i}, \hat v_i, \s_{-i}, 0)\hat v_i(0,\s_{-i}) + p_i(\v_{-i}, \hat v_i, \s_{-i}, 0) \label{eqn:identity2}
	\end{align}
	for all $y>L$.
	In addition, consider a bidder $i$ who has valuation function $v_i(\cdot)$ and private signal $s_i = v_i^{-1}(y)$ so that her value equals $y$. Due to ex-post IC, bidder $i$'s utility should not increase when she misreport $\hat v_i(\cdot)$ and signal $s_i = \hat v_i^{-1}(y)$, i.e.,
	\begin{align*}
	y\cdot x_{v_i}(y)-p_i(\v_{-i},v_i, \s_{-i},v_i^{-1}(y)) \ge y\cdot x_{\hat v_i}(y) - p_i(\v_{-i}, \hat v_i, \s_{-i},\hat v_i^{-1}(y)) 
	\end{align*}
	The same should hold for a bidder with valuation function $\hat v_i(\cdot)$ and signal $s_i = \hat v_i^{-1}(y)$. Her utility should not increase if she misreport $v_i(\cdot)$ and signal $s_i = v_i^{-1}(y)$, i.e., 
	\begin{align*}
	y\cdot x_{v_i}(y)-p_i(\v_{-i},v_i, \s_{-i},v_i^{-1}(y)) \le y\cdot x_{\hat v_i}(y) - p_i(\v_{-i}, \hat v_i, \s_{-i},\hat v_i^{-1}(y)).
	\end{align*}
	Therefore, bidder $i$'s utility for these two cases should be the same,
	\begin{align}
	y\cdot x_{v_i}(y)-p_i(\v_{-i},v_i, \s_{-i},v_i^{-1}(y)) = y\cdot x_{\hat v_i}(y) - p_i(\v_{-i}, \hat v_i, \s_{-i},\hat v_i^{-1}(y)) \label{eqn:identity3}.
	\end{align}
	because otherwise bidder $i$ can misreport in one of the two cases to improve the expected utility. Then if we subtract \eqref{eqn:identity1} with \eqref{eqn:identity2} and combine it with \eqref{eqn:identity3}, we get
	\begin{align*}
		&\int_{v_i(0, \s_{-i})}^y x_{v_i}(t) \, dt - \int_{\hat v_i(0, \s_{-i})}^y x_{\hat v_i}(t) \, dt \\
		 = &-x_i(\v_{-i}, v_i, \s_{-i}, 0)v_i(0,\s_{-i}) + p_i(\v_{-i}, v_i, \s_{-i},0) + x_i(\v_{-i}, \hat v_i, \s_{-i}, 0)\hat v_i(0,\s_{-i}) - p_i(\v_{-i}, \hat v_i, \s_{-i}, 0)\\
		= & \text{ a constant}.
	\end{align*}
	This holds for all $y>L$. This difference should still be a constant if we truncate the integrals at point $L$, that is, 
	\begin{align}
	& \int_{L}^y x_{v_i}(t) \, dt - \int_{L}^y x_{\hat v_i}(t) \, dt \notag \\
	=&	\left(\int_{v_i(0, \s_{-i})}^y x_{v_i}(t) \, dt - \int_{v_i(0, \s_{-i})}^L x_{v_i}(t) \, dt\right) - \left(\int_{\hat v_i(0, \s_{-i})}^y x_{\hat v_i}(t) \, dt -\int_{\hat v_i(0, \s_{-i})}^L x_{\hat v_i}(t) \, dt \right) \notag \\
	=& \left(\int_{v_i(0, \s_{-i})}^y x_{v_i}(t) \, dt - \int_{\hat v_i(0, \s_{-i})}^y x_{\hat v_i}(t) \, dt\right) -\int_{v_i(0, \s_{-i})}^L x_{v_i}(t)  +\int_{\hat v_i(0, \s_{-i})}^L x_{\hat v_i}(t) \, dt \notag\\
	= & \text{ a constant}. \label{eqn:equal}
	\end{align}

	We then apply the Lebesgue Differentiation Theorem to show that $x_{v_i}(\cdot) = x_{\hat v_i} (\cdot)$ almost everywhere on $[L, +\infty)$. Let $g(y) = x_{v_i}(y) - x_{\hat v_i}(y)$ for $y\in [L, +\infty)$. According to \eqref{eqn:equal}, we have $\int_{L}^y g(t)\, dt = \text{a constant}$ for all $y>L$, which means that $\int_{y}^{y+r} g(t)\, dt = 0$ for all $y>L$ and $r\ge 0$.  Then according to the Lebesgue Differentiation Theorem,
   \begin{align*}
   g(y) = 	\lim_{r\to 0^+} \frac{1}{r} \int_{[y,y+r]} g \,d\mu = 0 
   \end{align*}
	almost everywhere on $[L, +\infty)$, where $\mu$ is the Lebesgue measure on $[L, +\infty)$. This completes our proof.

\subsection{Proof of Lemma~\ref{lem:outdegree-chi}} \label{sec:outdegree-chi-proof}
We prove by induction on the number of vertices $n$ in the graph. For $n\leq 2k+1$, the graph is trivially $2k+1$ colorable. Assume the claim is true for $n-1$ vertices, and consider a graph of $n$ vertices. The number of edges in such a graph is at most $nk$ (each vertex adds at most $k$ new edges), and therefore, the sum of %(in+out) 
degrees (including both in-degrees and out-degrees) is at most $2nk$. By an averaging argument, there exists a vertex with degree at most $2k$. Consider this vertex $v$, and the graph $G'=G\setminus\{v\}$. By the induction hypothesis, $G'$ is $2k+1$-colorable, but $v$ has at most $2k$ neighbors. Therefore, there is a color that we can assign to $v$ which is different than all its neighbors.

\end{document}